\pdfoutput=1
\PassOptionsToPackage{noend}{algpseudocode}
\documentclass[11pt]{article}

\usepackage[T1]{fontenc}
\usepackage[utf8]{inputenc}
\usepackage[a4paper,margin=25mm]{geometry}
\usepackage[protrusion=true,expansion=false]{microtype}

\usepackage{amsmath,amssymb,amsthm}
\usepackage{mathtools}
\usepackage{cases}
\usepackage{bbold}
\usepackage{dsfont}

\usepackage{graphicx}
\usepackage{xcolor}
\usepackage{xspace}
\usepackage{algorithm}
\usepackage{algpseudocode}
\usepackage{cite}
\usepackage[hidelinks]{hyperref}

\theoremstyle{plain}
\newtheorem{theorem}{Theorem}
\newtheorem{lemma}[theorem]{Lemma}
\theoremstyle{definition}
\newtheorem{problem}{Problem}

\graphicspath{{./}{./}}

\newcommand{\NetEC}{\mbox{NetEC}\xspace}
\newcommand{\lcaf}{\mathsf{lca}}

\newcommand{\arrow}{\rightarrow}

\DeclareMathOperator{\True}{\mathsf{True}\xspace}
\DeclareMathOperator{\False}{\mathsf{False}\xspace}
\DeclareMathOperator{\Unknown}{\mathsf{Unknown}\xspace}
\DeclareMathOperator{\Lop}{\mathsf{L}}
\DeclareMathOperator{\Mop}{\mathsf{M}}
\DeclareMathOperator{\troot}{\mathsf{root}\xspace}
\DeclareMathOperator{\Map}{\mathsf{M}\xspace}

\newcommand{\Mapf}{F}
\DeclareMathOperator{\dset}{\mathsf{Dup}\xspace}
\DeclareMathOperator{\dsetupper}{\mathsf{UDup}\xspace}
\DeclareMathOperator{\epi}{\mathsf{Epi}\xspace}
\DeclareMathOperator{\upperepi}{\mathsf{UEpi}\xspace}
\DeclareMathOperator{\EC}{\mathsf{EC}\xspace}

\newcommand{\opusUniNet}{2023/51/B/ST6/02792}

\title{Episode Clustering in Phylogenetic Networks}

\author{
Pawe{\l} G\'{o}recki\thanks{Corresponding author: \href{mailto:gorecki@mimuw.edu.pl}{gorecki@mimuw.edu.pl}}\quad
Agnieszka Mykowiecka\quad
Jarosław Paszek\\[0.6em]
\small Faculty of Mathematics, Informatics, and Mechanics, University of Warsaw,\\
\small Banacha 2, 02-097 Warsaw, Poland
}

\date{\small Preprint version submitted to ECCB 2026 before peer review.\\
A revised version has been accepted for publication in \emph{Bioinformatics} as part of the ECCB 2026 Proceedings.}

\begin{document}
\maketitle

\begin{abstract}
The classical duplication episode clustering (EC) model introduced by Guigó et al. in the 1990s provides a foundational approach for inferring genomic duplication events crucial to understanding genome evolution.
This model clusters single gene duplications from a collection of gene trees at locations in the species tree to minimize the total number of such locations, called duplication episodes.
Here, we introduce \NetEC, a novel extension of this problem to phylogenetic networks. To solve \NetEC, we first develop a polynomial-time dynamic programming (DP) algorithm for testing whether a given set of network nodes can serve as episode locations. We then propose a main inference algorithm that utilizes this DP component to optimize the episode count; while the feasibility test runs in polynomial time, the full optimization has exponential worst-case complexity, and an optional heuristic mode is provided for larger instances. We also propose an extended episode analysis procedure that identifies additional genomic duplication candidates below reticulation nodes, complementing the main algorithm by resolving potential upward clustering of duplications induced by reticulation. We evaluate our method on simulated data and on an empirical Pandanales dataset comprising over 29,000 gene trees, demonstrating exact and accurate inference of genomic duplication events even in the presence of multiple reticulations.
\end{abstract}

\noindent\textbf{Keywords:} Genomic Duplication; Duplication Episode; Gene Tree; Species Tree; Phylogenetic Network

\medskip

\section{Introduction}

Phylogenetic networks have emerged as a robust framework for representing complex evolutionary relationships~\cite{Huson2010} that traditional tree-based models cannot adequately capture. Unlike phylogenetic trees, networks accommodate reticulate events such as hybridization, horizontal gene transfer, and recombination, creating multiple pathways of inheritance.

Whole-genome duplications (WGDs) represent particularly significant evolutionary events that have shaped eukaryotic genomes~\cite{saceWGDgeneretention, ohno}. In phylogenetic networks, WGDs introduce additional complexity: hybridization following independent WGDs in parental lineages generates intricate patterns of gene family evolution, as exemplified by polyploid plant and fungal lineages~\cite{pmid27479829domestication, wolfe1997molecular}.

The concept of duplication episode clustering, developed initially for tree-based phylogenies~\cite{Guigo1996}, aims to identify genomic locations where multiple gene duplications co-occurred, suggesting large-scale duplication events like WGDs. Informally, an \emph{episode} is a node in the species phylogeny to which one or more gene duplications are assigned; episode clustering seeks the smallest set of such nodes explaining all observed duplications across a collection of gene trees. While episode clustering for trees has been extensively studied with efficient polynomial-time algorithms~\cite{burleigh2008locating, Luo2011, RME}, its extension to phylogenetic networks remains largely unexplored. The tree-based formulations, including interval models~\cite{RME}, unrooted variants~\cite{Paszek2018}, and path-constrained clustering~\cite{RMP:pmid31425045}, all rely on the fundamental assumption of a unique evolutionary path between any two nodes, an assumption violated in networks.

Extending episode clustering to networks introduces several theoretical and computational challenges. First, the presence of reticulation nodes means that gene duplications can be assigned to episodes along multiple alternative evolutionary histories, creating an exponential space of possible episode configurations. Second, the biological interpretation of episodes in networks requires careful consideration: should episodes be defined on the network structure itself, or on the individual tree-like scenarios (display trees) embedded within the network? Third, the computational complexity of network reconciliation suggests that episode clustering in networks may require fundamentally different algorithmic strategies than those used in tree-based approaches.

In our recent work~\cite{Gorecki24}, we introduced a polynomial-time dynamic programming algorithm using three-valued logic for episode feasibility testing in the tree-based setting. Here we extend this approach to phylogenetic networks, where multiple alternative evolutionary histories through reticulation nodes complicate episode assignment.

In this work, we formalize \NetEC (Network Episode Clustering), the problem of identifying duplication episodes in phylogenetic networks, and introduce the first algorithmic solution for it in the presence of reticulate evolution. Building on~\cite{Gorecki24}, we extend the three-valued logic dynamic programming approach to handle multiple evolutionary scenarios in networks. While the previous approach resolves uncertainty about
  gene-species leaf assignments, here we show how to resolve uncertainty about which evolutionary paths through reticulation nodes
  are compatible with observed duplication patterns. Our approach leverages unfolded networks, a transformation
  expanding a network into a tree-like structure representing all scenario choices, and formulates the problem using
  scenario functions that determine reticulation path choices.

The key contributions of this work are: (1) a formal definition of episode clustering for phylogenetic networks based on scenario-based gene-network reconciliation; (2) a polynomial-time algorithm for testing episode feasibility in networks via dynamic programming; (3) an exact algorithm for \NetEC with practical heuristics for identifying optimal solutions; and (4) experimental validation demonstrating the method's effectiveness in detecting WGDs in simulated and real biological networks with complex reticulate histories.

\section{Basic Definitions}

We collect here the basic terminology used throughout the paper.

\newcommand{\taxaset}{\Theta}

A (phylogenetic) \emph{network on a set of taxa $\taxaset$} is a directed acyclic graph $N = (V(N), E(N))$ such that
(1) there is a unique node, called a \emph{root}, such that there is a directed path from the root to any node in $N$ and (2) leaves of $N$, i.e., nodes of indegree 1 and outdegree 0, are bijectively labelled by the elements from $\taxaset$.
The leaf labelling is a function $\Lambda \colon L(N) \rightarrow \taxaset$, where $L(N)$ is the set of all leaves in $N$.
A node of $N$ is a reticulation if it has an indegree of at least 2. Nodes that are not leaves are \emph{internal}; internal nodes that are not reticulations are called \emph{tree nodes}.
By $R(N)$ we denote the set of all reticulations in $N$. If $(s, t) \in E(N)$ then $s$ is called a \emph{parent} of $t$ and $t$ is called a \emph{child} of $s$.
A network is binary if its leaves, root, and the remaining nodes have degrees 1, 2 and 3, respectively.
$N$ is \emph{semi-binary} if, in addition, it may contain semi-binary nodes with indegree at most 1 and outdegree 1, including the case where the root has exactly one child. A semi-binary node $v$ of indegree $1$ can be contracted by: (1) removing $v$ and the edges incident with $v$, and (2) inserting a new edge connecting the parent of $v$ with the child of $v$.
If $v$ has indegree $0$, then after removing $v$, the child of $v$ becomes the new root.
We say that an edge $(s,t) \in E(N)$ is a \emph{reticulation edge} if $t \in R(N)$.
If a node $s$ has exactly one child, then the child is denoted by $s'$ and if $s$ has exactly two children, then the children are denoted by $s'$ and $s''$. In the latter case, we say that $s'$ is a sibling of $s''$ and vice versa.
If there is a directed path from $s$ to $t$ (following edge directions from root toward leaves), then we say that $t$ is \emph{visible} from $s$, denoted as~$s \succeq t$.
A network on $\taxaset$ is \emph{tree-child} if every non-leaf node has a non-reticulation child.

A \emph{gene tree over a set of taxa} $\taxaset$ is defined similarly to the network but with two differences: it has no reticulation nodes, and the leaf labeling $\Lambda$ is not required to be a bijection.

\subsection{Unfolded Network}\label{sec:unfolded}

The unfolded network is a specific multi-labeled tree (MUL-tree, \cite{Huber2006}) obtained by unfolding reticulation nodes. Here, we briefly recall the unfolding construction from~\cite{wawerka22}.

For a network $N$, and, for each $i = 0,1,\dots,|R(N)|$ we define a pair $(N_i, \pi_i)$ as follows. Let $N_0 = N$ and $\pi_0$ be the identity function on $V(N)$.
Then, $(N_{i+1}, \pi_{i+1})$ is obtained from $(N_i, \pi_i)$ by the unfolding operation:
(1) pick a reticulation $p \in R(N_i)$ such that no other re\-ti\-culation is visible from $p$. Let $N_i^p$ be the subtree of $N_i$ rooted at $p$ and let $u$ denote an arbitrary parent of $p$; (2) copy $N_i^p$, (3) remove the edge $(u, p)$ and add a new edge from $u$ to the root of the copy of $N_i^p$. For $s \in N_{i+1}$ let $\pi_{i+1}(s) := \pi_i(s)$ if $s \in N_i$ and $\pi_{i+1}(s) := \pi_i(t)$, if $s$ is a copy of a node $t \in N_{i}$. See the middle example in Fig.~\ref{fig:intro}.

$N_{|R(N)|}$ is called the \emph{unfolded network} of $N$ and denoted $\hat{N}$. Also, by $\pi$ we denote~$\pi_{|R(N)|}$. In other words, $\pi$ is the projection of unfolded network nodes to the source nodes in $N$.

It follows from~\cite{wawerka22} that the unfolded network $\hat{N}$ of $N$ is a semi-binary tree. There is a one-to-one correspondence between root-leaf paths in $N$ and root-leaf paths in $\hat{N}$ established by $\pi$, i.e.,
if $P=p_1,p_2,\dots,p_m$ is a root-leaf path in $\hat{N}$, then $\pi(P)=\pi(p_1),\pi(p_2),\dots,\pi(p_m)$ is the corresponding root-leaf path~in~$N$.

Note that the size of $\hat{N}$ grows exponentially with the size of $R(N)$. It will become evident later that our algorithms do not use unfolded networks directly.

For a gene tree $G$ over a taxa set $Y$ from a network $N$, a \emph{scenario} is a function $\xi \colon L(G) \rightarrow L(\hat{N})$ such that for each $l \in L(G)$, $\Lambda(l) = \Lambda(\xi(l))$. An \emph{lca-mapping in a scenario} $\xi$ is a function $\Map_\xi \colon V(G) \rightarrow V(\hat{N})$ that extends $\xi$:
  \begin{numcases}{\Map_\xi(g) =}
      \xi(g) & if $g \in L(G)$, \label{eq:lca_leaf}\\
      \lcaf(\Map_\xi(g'), \Map_\xi(g'')) & otherwise, \label{eq:lca_internal}
  \end{numcases}
 where $\lcaf(x,y)$ is the least common ancestor of nodes $x$ and $y$.

A scenario can be visualized as an embedding of a gene tree into an unfolded network.~See~Fig.~\ref{fig:intro}.

\begin{figure}[t]
\centering
\includegraphics[width=\columnwidth]{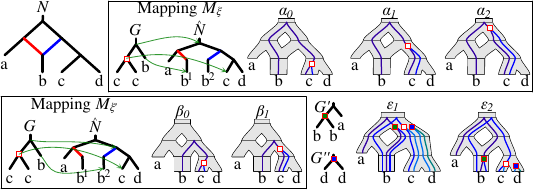}
\caption{\emph{Top-left:} A network $N$ with one reticulation. \emph{Framed Panels:} A gene tree $G$, an unfolded network $\hat{N}$ obtained from $N$, and two possible scenarios $\xi$ and $\xi'$. A scenario $\xi$ routes gene tree leaves through one parent edge of the reticulation node, while $\xi'$ routes them through the other, yielding two distinct histories. There are two lca-mappings $M_\xi$ and $M_{\xi'}$ and five valid mappings $\alpha_0-\alpha_2$, and $\beta_0-\beta_1$ shown here as embeddings with duplications (squares). Embeddings $\alpha_0$ and $\beta_0$ correspond to $M_\xi$ and $M_{\xi'}$. \emph{Bottom-right:} two alternative episode clusterings $\epsilon_1$ and $\epsilon_2$ arising from gene trees $G$, $G'$ and $G''$. In $\epsilon_1$, one episode at the root's right child groups three duplications at a single network node, whereas $\epsilon_2$ assigns each duplication to a separate episode at leaves $b$, $c$ and $d$ as singletons. Our objective is to minimize the number of episode locations, so $\epsilon_1$ is preferred over $\epsilon_2$. Note that in embeddings, duplications are placed on the edges directly above their mapped nodes, reflecting that these events precede the corresponding speciation (or leaf) in evolutionary time.}
\label{fig:intro}
\end{figure}

An internal node $g \in V(G)$ is a \emph{duplication} in a scenario $\xi \colon L(G) \rightarrow L(\hat{N})$, or \emph{$\xi$-duplication}, if $\Map_\xi(g) = \Map_\xi(a)$ for a child $a$ of $g$.
The remaining internal nodes, we call \emph{$\xi$-speciations}.

\subsection{Episode Clustering Problems}

We now formalize the model of duplication episodes for phylogenetic networks. Unlike the tree-based episode clustering model, our network-based approach must account for the multiplicity of evolutionary scenarios induced by reticulation nodes. The model admits all biologically plausible evolutionary scenarios with minimal duplication events.

In phylogenetic networks, episode clustering requires careful handling of the interplay between scenario selection (choosing paths through reticulations) and duplication assignment (determining episode locations). The model of gene duplication episodes allows relocating a gene duplication from its lca-mapping node to one of its ancestors, subject to additional constraints that preserve both biological soundness and scenario consistency. For a gene tree $G$ over a set of taxa from a network $N$, a mapping $\Mapf \colon V(G) \rightarrow V(\hat{N})$ is \emph{valid}, if there is a scenario $\xi$ between $G$ and $N$ such that
\begin{itemize}
\item $\Mapf(a) \preceq \Mapf(b)$ if $a \preceq b$ (time consistency),
\item $\Mapf(a)=\pi(\Map_{\xi}(a))$ for any $\xi$-speciation or a leaf $a$ (fixed speciations),
\item $\Mapf(a) \succeq \Map_{\xi}(a)$ for any $\xi$-duplication node $a$
  (duplication can be raised),
\item and, $\Mapf(a) \prec \Map_{\xi}(b)$ for any speciation node $b$ such that
  $a \prec b$ (fixed number of $\xi$-duplications).
\end{itemize}

Note that if $\Mapf$ is valid on $\xi$, then
$\Mapf|_{L(G)} = \xi$. Therefore, if a valid $\Mapf$ on $\xi$ is given, we often use the term $\Mapf$-duplication (or $\Mapf$-speciation) instead of $\xi$-duplication ($\xi$-speciation, respectively).
Additionally, $\Mapf$-duplications $g$ satisfying $\Mapf_\xi(g) = \Map_\xi(g)$ are called \emph{lca-duplications} (when the context $\Mapf$ is clear).

In practice, we are interested in locations of events in the original network $N$, rather than $\hat{N}$.
Therefore, let $\Mapf^* \colon V(G) \to V(N)$ be the mapping $\Mapf^*(g)=\pi(\Mapf(g))$, i.e., the mapping that allows a direct embedding of a gene tree into the network.
We also write that a node $g \in V(G)$ is \emph{assigned} to $s \in V(N)$ if $\Mapf^*(g)=s$.
By $\dset(\Mapf) \subseteq V(G)$, we denote the set of all $\Mapf$-duplications, while by $\epi(\Mapf) \subseteq V(N)$ we denote the set of all $\Mapf$-duplication locations in the network $N$, defined as \[ \epi(\Mapf) = \{ \Mapf^*(g) \colon g \in \dset(\Mapf)\}. \]

Fig.~\ref{fig:intro} provides an example of valid mappings that define an evolutionary scenario that can be represented as a tree with additional decoration of nodes. For details on the formal modeling of evolutionary scenarios, refer to~\cite{Gorecki2006dls}.

Assume that $\Mapf_i \colon G_i \to \hat{N}$ is a valid mapping between a gene tree $G_i$ over a set of taxa from a network $N$, for every $i \in \{1,2,\dots,n\}$.
Every element in $\bigcup_i \epi(\Mapf_i)$ denotes the location of
a multiple gene duplication event in $N$. We refer to these nodes as \emph{duplication episodes} or simply \emph{episodes}. An episode represents a location in the network where multiple gene duplications have occurred, potentially across different scenarios corresponding to different gene families.

\begin{problem}[Network Episode Clustering, \NetEC]\label{NetEC}
Given a phylogenetic network $N$ and a collection of rooted gene trees $G_1, G_2,\dots, G_k$ over the set of taxa present in $N$.
Compute the minimum number of duplication episodes,
denoted by $\EC(G_1, G_2, \dots, G_k, N)$, in the set of all
valid mappings $\Mapf_1, \Mapf_2, \dots, \Mapf_k$ between $G_i$ and $\hat{N}$, respectively.
\end{problem}
This problem is solvable in linear time when $N$ is a tree~\cite{Paszek2018}; its complexity for networks remains open.

\section{Methods}

Our approach to \NetEC proceeds in three stages. First, we develop a constrained feasibility test that determines whether a given set of network nodes can serve as episode locations for a single gene tree. Second, we extend this test to compute the minimum episode clustering for a single gene tree. Third, we generalize to multiple gene trees. The key insight enabling our approach is the observation that while a phylogenetic network may encode exponentially many scenarios, we can efficiently reason about episode feasibility by working directly on the network structure.

\subsection{Net-Episode Feasibility}
\label{sec:EFPDP}

We start with a fundamental constrained problem. Given a gene tree and a phylogenetic network, we ask whether there exists a scenario and a corresponding valid mapping such that the set of duplication episodes is contained in a given fixed set of candidate episode locations in the network.

\begin{problem}[Net-Episode Feasibility]
\label{problem:feasibility}
Given a gene tree $G$ over a set of taxa from a network $N$ and $X \subseteq V(N)$.
Does there exist a valid mapping $\Mapf$ from $G$ to $\hat{N}$
such that $\epi(\Mapf) \subseteq X$?
\end{problem}
If a gene tree $G$ satisfies the above property, we call $G$ \emph{$X$-feasible} with respect to $N$. If the context is clear, we omit the reference to $X$.
Reticulation nodes can be excluded from episode locations: any duplication assigned to a reticulation is indistinguishable from one assigned to its unique child, so it can always be reassigned downward without changing the episode count. We therefore assume w.l.o.g.\ that $X \subseteq V(N) \setminus R(N)$, which reduces the search space of our algorithms.

\newcommand{\unf}{\hat{N}}
A challenge in network reconciliation is that duplications may be constrained to episodes not in the candidate set $X$. Still, these constraints may only become apparent when traversing upward in the network toward the root. To handle this uncertainty, we employ Łukasiewicz's Three-Valued Logic $\text{Ł}_3$~\cite{lukasiewicz1970selected}, which extends Boolean logic with an $\Unknown$ value representing situations where episode assignment is deferred to higher levels of the network.
This three-valued logic has constants $\True$, $\False$, and $\Unknown$ ordered linearly as $\False < \Unknown < \True$, with binary operators $\vee$ (disjunction, $\max$) and $\wedge$ (conjunction, $\min$). The unary operators are defined as: $\Lop$ (certainty operator), where $\Lop(x) = \True$ if $x = \True$ and $\Lop(x) = \False$ otherwise; and $\Mop$ (possibility operator), where $\Mop(x) = \False$ if $x = \False$ and $\Mop(x) = \True$ otherwise.

\newcommand{\Ft}{$\mathds F$}
\newcommand{\Ut}{$\mathds U$}
\newcommand{\Tt}{$\mathds T$}

\newcommand{\Ftn}{\mathds F}
\newcommand{\Utn}{\mathds U}
\newcommand{\Ttn}{\mathds T}

For a node $v$ of a tree $T$, by $T|v$ we denote the subtree of $T$ rooted~at~$v$.
Recall that $s'$ and $s''$ denote the children of $s$, and similarly for $g$.
To simplify the notation, we assume that the set $X \subseteq V(N)$ is fixed.

The dynamic programming formulas to solve Net-Episode Feasibility are depicted in Fig.~\ref{fig:dp}.
\begin{figure}[!thbp]
\small
\textbf{For $g \in V(G)$ and $s \in V(N)\setminus R(N)$:}
\begin{numcases}{\delta(g,s)=}
    \delta^*(g,s)  & \text{$g$ internal, $s \in X$,} \label{v1d} \\
    \delta^*(g,s) \wedge \Unknown & \text{$g$ internal, $s \notin X$,} \label{v2d} \\
     \False & \text{otherwise,} \label{v3d}
\end{numcases}
\begin{equation}
\delta^*(g,s)=\epsilon(g',s) \wedge \delta^\downarrow(g'',s) \vee \epsilon(g'',s) \wedge \delta^\downarrow(g',s),\label{vdstar}
\end{equation}
\begin{numcases}{\delta^\downarrow(g,s)=\epsilon(g,s) \vee}
    \bigvee_{c \in ch(s)} \Mop \delta^\downarrow(g,c) & \text{$s$ non-leaf, $s \in X$,} \label{v2dd} \\
    \bigvee_{c \in ch(s)} \delta^\downarrow(g,c)
     & \text{otherwise,} \label{v1dd}
\end{numcases}
\begin{numcases}{\sigma(g,s)=}
    \Lop \big ( \delta^\downarrow(g',s') \wedge \delta^\downarrow(g'',s'') \nonumber\\
    \vee  \delta^\downarrow(g',s'') \wedge \delta^\downarrow(g'',s') \big)
         & \text{$g$ internal, $s$ tree-node,} \label{v1s} \\
     \True & \text{$g$ leaf labelled $s$,} \label{v2s} \\
     \False & \text{otherwise,} \label{v3s}
\end{numcases}
\textbf{For $s\in R(N)$:} $f(g,s)=f(g,s')$ for $f \in \{\delta, \delta^*, \delta^\downarrow, \sigma\}$.\refstepcounter{equation}\label{vdretcase}\hfill(\theequation)

\textbf{Auxiliary function:} $\epsilon(g,s)=\sigma(g,s) \vee \delta(g,s).$
\caption{Dynamic programming (DP) formulas for Net-Episode Feasibility. In $\delta^\downarrow$, $ch(s)$ is the set of children of a node $s$, the empty set if $s$ is a leaf. The auxiliary function $\epsilon(g,s)$ captures whether $g$ can be placed at $s$ either as a speciation or a duplication.}
\label{fig:dp}
\end{figure}

For a gene tree $G$ over $N$ and a node $g \in V(G)$,
let $\Mapf \colon V(G|g) \to V(\hat{N})$ be a valid mapping. We say that $\Mapf$ is \emph{feasible} for $(g,s,X)$ if and only if $\Mapf^*(g) \preceq s$  and $\epi(\Mapf) \subseteq X$.
Feasible mappings represent episode scenarios that correspond to partial solutions to the instance of Net-Episode Feasibility that have all duplications present in $X$.
We say that an $\Mapf$-duplication $d$ in a gene tree $T$ is \emph{upper} if
the path from $d$ to $g$ consists of $\Mapf$-duplications.
The set of all upper $\Mapf$-duplications we denote $\dsetupper(\Mapf) \subseteq \dset(\Mapf)$ and their locations in $N$ we denote by $\upperepi(\Mapf) \subseteq \epi(\Mapf)$.
We write that $\Mapf \colon V(G|g) \to V(\hat{N})$ is \emph{weakly feasible} for $(g,s,X)$ if and only if $\Mapf^*(g) \preceq s$, $g \in \dsetupper(\Mapf)$ and $\epi(\Mapf) \setminus \upperepi(\Mapf) \subseteq X$.

In weakly feasible mappings we constrain only non-upper duplications present in $G|g$, while the upper duplications will be elements of episode $s' \in X$, such that $s \prec s'$, if such $s'$ exists. This situation is modeled by $\Unknown$ value returned from $\delta^\downarrow(g,s)$ and $\delta(g,s)$~calls.

Informally, the meaning of DP formulas can be understood as follows:
$\delta(g, s)$ is $\True$ if there is a feasible mapping $\Mapf$ for $(g,s,X)$
such that $s \in X$, $g$ is an $\Mapf$-duplication assigned to $s \in X$, and all duplications are assigned to the episodes from $X$. Similarly, $\sigma(g, s)$ is $\True$ if there is a feasible mapping $\Mapf$, where $\Mapf^*(g)=s$ and $g$ is a speciation or a leaf. Next, $\delta(g, s)$ is $\Unknown$ if there is no feasible mapping
for $(g,s,X)$, however, there is a weakly feasible mapping $\Mapf$ for $(g,s,X)$,
where $g$ is an $\Mapf$-duplication assigned to $s \notin X$, and all non-upper duplications from $G|g$ are assigned to the episodes from $X$.
Note that $\sigma(g, s)$ cannot be $\Unknown$ since speciation nodes are fixed. Moving on, $\delta^\downarrow(g, s)$ is $\True$ if there is a feasible mapping for $(g,s,X)$, where all duplications are assigned to the episodes from $X$. Lastly, $\delta^\downarrow(g, s)$ is $\Unknown$ if the condition for $\delta^\downarrow(g, s) = \True$ is not met. However, there is a weakly feasible mapping for $(g,s,X)$.

To solve Net-Episode Feasibility, we apply $\delta^\downarrow$ on the roots.
\begin{theorem}[Correctness]
\label{lem:DPcormain}
Given a gene tree $G$ over a network $N$ and $X \subseteq V(N)\setminus R(N)$.
$G$ is $X$-feasible if and only if $\delta^\downarrow(\troot(G),\troot(N))$ is $\True$.
\end{theorem}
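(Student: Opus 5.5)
We prove a stronger, three-valued invariant by simultaneous induction, and the theorem then follows as its instance at the two roots. For every $g \in V(G)$ and $s \in V(N)\setminus R(N)$ the invariant reads:
\begin{itemize}
\item[(i)] $\sigma(g,s)=\True$ iff there is a feasible mapping $\Mapf$ for $(g,s,X)$ with $\Mapf^*(g)=s$ and $g$ a speciation or leaf. Otherwise $\sigma(g,s)=\False$.
\item[(ii)] $\delta(g,s)=\True$ iff $s\in X$ and there is a feasible mapping for $(g,s,X)$ in which $g$ is a duplication assigned to $s$.
\item[(iii)] $\delta(g,s)\neq\False$ iff there is a weakly feasible mapping in which $g$ is a duplication assigned to $s$.
\item[(iv)] $\delta^\downarrow(g,s)=\True$ iff a feasible mapping for $(g,s,X)$ exists.
\item[(v)] $\delta^\downarrow(g,s)\neq\False$ iff a feasible or weakly feasible mapping for $(g,s,X)$ exists.
\end{itemize}
Reticulations are handled by rule~(\ref{vdretcase}). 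Since a reticulation has a unique child and $X$ contains no reticulations, assigning to $s$ or to $s'$ is equivalent; we use here the remark preceding the theorem that such assignments can be pushed down.

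The induction runs on $|V(G|g)|$ plus the height of $s$ in $N$, ordered by the dependencies in Fig.~\ref{fig:dp}. Note that $\sigma$ and $\delta^*$ refer only to children of $g$, and $\delta^\downarrow$ refers to the same $g$ at children of $s$. Granted the invariant, the theorem is immediate. A feasible mapping for $(\troot(G),\troot(N),X)$ is exactly a valid $\Mapf$ with $\epi(\Mapf)\subseteq X$, since $\Mapf^*(\troot(G))\preceq\troot(N)$ holds trivially. Hence (iv) at the roots gives both directions.

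For each formula we prove soundness (a DP value yields a mapping) and completeness (a mapping forces the DP value).

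\emph{Case $\sigma$, formula (\ref{v1s}).} Given feasible mappings of $g'$ below $s'$ and $g''$ below $s''$, we glue them together. We pick a scenario and, for the subtrees below the two children of the copy of $s$ in $\hat N$, use the path correspondence $\pi$ between $N$ and $\hat N$. This yields a common scenario $\xi$ in which $\Map_\xi(g)$ is that copy and $g$ is a $\xi$-speciation. The operator $\Lop$ is needed because $g$ is a speciation and so cannot absorb upper duplications of its children: any $\Unknown$ child must be rejected. Conversely, in any feasible mapping with $g$ a speciation at $s$, the children map strictly into distinct child subtrees of $s$.

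\emph{Case $\delta$ and $\delta^*$, formulas (\ref{v1d})--(\ref{vdstar}).} If $g$ is a duplication at $s$, then by time consistency one child lies at $s$ and the other lies weakly below $s$. The child at $s$ is itself a speciation or a duplication placed at $s$, which is the term $\epsilon$. The other child is handled by $\delta^\downarrow$. An $\Unknown$ from a child means it has upper duplications still unassigned. These upper duplications become upper duplications of $g$, so $g$'s own status is $\Unknown$ unless $s\in X$. When $s\in X$, the pending duplications located at $s$ are resolved.

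\emph{Case $\delta^\downarrow$, formulas (\ref{v2dd})--(\ref{v1dd}).} This is a disjunction over placing $g$ exactly at $s$ (the term $\epsilon$) or strictly below $s$ through some child $c$. When $s\in X$, the operator $\Mop$ upgrades $\Unknown$ to $\True$. This corresponds to raising all pending upper duplications to $s$, which is allowed by the ``duplication can be raised'' rule. It is limited by the last validity condition, since upper duplications lie below their nearest speciation ancestor, which is placed above $s$.

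The main obstacle is the $\Mop$ step together with scenario consistency. First, raising upper duplications from a copy of $c$ to the corresponding copy of $s$ in $\hat N$ must keep the mapping valid. We check each of the four validity conditions, using that the raised nodes are exactly the upper duplications, so no speciation is moved. Second, mappings of sibling subtrees, built over possibly different scenarios, must combine into one scenario. I would settle this first with a lemma: scenarios on disjoint leaf sets of $G$ combine freely, and any path in $N$ from $s$ lifts to a path in $\hat N$ from any chosen copy of $s$. Both facts follow from the root-leaf path bijection stated for the unfolded network.
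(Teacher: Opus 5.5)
\textbf{There is a genuine gap: clauses (ii) and (iii) of your invariant are false as stated.}

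Your overall architecture is the paper's. It is a simultaneous induction over $(g,s)$ for $\sigma$, $\delta$ and $\delta^\downarrow$; your (iii) and (v) simply merge the paper's P1/P2 and P5/P6. The theorem is then read off from the $\delta^\downarrow$ clause at the two roots, and your path-lifting lemma plays the role of the shift $F_w$ and the products $\oplus$, $\otimes$.

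The problem is that $\delta(g,s)$ only ever certifies an \emph{lca-duplication} at $s$. The value $\delta^*(g,s)$ needs a child with $\epsilon(\cdot,s)\neq\False$. Unwinding $\epsilon$ through $\delta$ always ends at a speciation or leaf placed exactly at $s$. So a duplication raised to $s$ from a lower lca-mapping is invisible to $\delta(g,s)$.

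Here is a concrete counterexample. Let $N$ be the tree with root $s$ and leaves $s'$, $s''$ labelled $a$, $b$. Let $X=\{s\}$, and let $G=(a_1,a_2)$ with both leaves labelled $a$.
\begin{itemize}
\item Raising the root $g$ of $G$ from $\Map_\xi(g)=s'$ to $s$ gives a valid mapping. It is feasible, and hence also weakly feasible, for $(g,s,X)$, with $g$ a duplication assigned to $s$.
\item However, $\sigma(a_i,s)=\delta(a_i,s)=\False$, so $\epsilon(a_i,s)=\False$ and $\delta(g,s)=\delta^*(g,s)=\False$.
\item This refutes the ``if'' direction of both (ii) and (iii).
\end{itemize}
Your sentence ``if $g$ is a duplication at $s$, then by time consistency one child lies at $s$'' fails for the same reason: here both children sit at $s'$. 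The theorem itself survives, because $\delta^\downarrow(g,s')=\Unknown$ and $\Mop$ turns it into $\True$ at $s$. That rescue is exactly the case your induction does not cover.

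\textbf{The repair is what the paper does.} First, state (ii) and (iii) for lca-duplications only. Second, in the completeness direction for $\delta^\downarrow$, split ``$g$ is a duplication assigned to $s$'' into two cases:
\begin{itemize}
\item the lca case, which is handled by $\epsilon$;
\item the raised case, which is handled as follows.
\end{itemize}
In the raised case, let $v'$ be the child of $F(g)$ on the path to $\Map_\xi(g)$ in $\hat N$. Reassign every upper duplication to $v'$. The result is still valid, for two reasons:
\begin{itemize}
\item no speciation lies above these duplications inside $G|g$;
\item all their non-upper descendants, and their own lca-mappings, lie below $\Map_\xi(g)\preceq v'$.
\end{itemize}
The new mapping is feasible for $(g,\pi(v'),X)$ if $\pi(v')\in X$, and weakly feasible otherwise. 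Hence $\delta^\downarrow(g,\pi(v'))\neq\False$. Two sub-cases then close the argument:
\begin{itemize}
\item If the original mapping is feasible, then $s\in\epi(F)\subseteq X$, so the $\Mop$ in (\ref{v2dd}) yields $\True$.
\item If it is only weakly feasible and $s\notin X$, then (\ref{v1dd}) yields at least $\Unknown$.
\end{itemize}

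Your proposal only argues the raising (soundness) direction of $\Mop$. This lowering argument, the paper's Case P5.2.b, is the missing completeness step. Your description of $\delta^\downarrow$ as ``$g$ exactly at $s$ via $\epsilon$, or strictly below $s$ via a child'' must be corrected accordingly: a raised duplication at $s$ is captured through $\Mop\delta^\downarrow(g,c)$ for a child $c$ of $s$, not through $\epsilon$.
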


Finally, the time and space complexity of solving Episode Feasibility by the DP algorithm is $O(|V(G)| \cdot |V(N)|)$.

For feasible instances, the backtracking can identify the subset of nodes from $X$ that contribute to the optimal solution by quantifying the number of duplications, i.e., episode sizes for each episode. These duplication counts provide insights into the significance of each inferred episode.

See Appendix for the proofs.

\subsection{Solution for a single gene tree and the general case}

\newcommand{\FE}{\text{FE}}
\newcommand{\BestEC}{\text{BestEC}}
First, we describe the main algorithm to solve \NetEC for instances with a single gene tree.

Alg.~\ref{alg:NetECsingle} extends~\cite{Gorecki24} by first identifying \emph{fixed episodes} (nodes present in every solution) through testing feasibility with each node excluded; if removal makes the instance infeasible, that node is fixed. The main loop then searches for a set $C \subseteq V(N)\setminus\Phi$ of size $b-|\Phi|-1$ such that $G$ is $(C\cup\Phi)$-feasible, updates $b$ via DP backtracking, and terminates when no such $C$ exists.

\begin{algorithm}[!htbp]
\caption{Solution to \NetEC with a single gene tree}\label{alg:NetECsingle}
\small
\begin{algorithmic}[1]
\Require A gene tree $G$ over a network $N$
\Ensure $\EC(G,N)$
\State $\Phi \gets \emptyset$ \Comment{Init: the set of fixed episodes $\Phi$}
\For{every node $v$ in $V(N)\setminus R(N)$} \Comment{Identify fixed episodes}
    \If{there is no feasible mapping for $V(N) \setminus \{v\}$}
        \State Add $v$ to $\Phi$ ($v$ is a fixed episode)
    \EndIf
\EndFor
\State $b \gets |V(N)\setminus R(N)|$ \Comment{The initial maximal number of episodes}
\While{$\mbox{there is } C \subseteq V(N) \setminus \Phi \mbox{ of the size } b-|\Phi|-1$ and
      $G$ is $(C \cup \Phi)$-feasible} \Comment{The main loop}
    \State $b \gets$ the $\EC$ cost by backtracked DP from Fig.~\ref{fig:dp}.
\EndWhile
\State Optional backtracking: compute the episode sizes by counting duplications at given episode $s \in X $ when~\eqref{v1d} is reached in DP.\label{line:backtracking}
\State \Return $b$
\end{algorithmic}
\end{algorithm}

The correctness of the algorithm follows from the fact that if there is no set $X$ of size $b-1$ such that $G$ is $X$-feasible, then there is no set of any size smaller than $b$ that satisfies the property. Since $b$ represents the number of episodes from some valid mapping, it is also minimal. Therefore, when the algorithm terminates, $b=\EC(G,N)$.

The algorithm's worst-case time complexity is $n^2m + \sum_{k=f}^{n-f}{\binom{n-f}{k}nm} = O(nm \cdot 2^n)$, where $f$ is the size of the set of fixed episodes ($f=|\Phi|$), $n$ denotes the number of nodes in $N$, and $m$ denotes the number of nodes in $G$.
Despite the exponential time complexity, in our experiments on both simulated and empirical data, we were able to compute exact solutions after only a few executions of the main loop.

To identify the optimal solution within the main loop, enumerating all possible combinations of size $b-f-1$ from the set of episode candidates $V(N) \setminus \Phi$ may be time-consuming for larger instances. To address this issue, we propose a heuristic approach that randomly samples combinations of size $b-f-1$ if $\binom{n-f}{b-f-1}$ is large, similarly to our previous solution from~\cite{Gorecki24}. In our experiments, the heuristic mode was not reached.

To solve \NetEC in a general case, we transform the problem to a single gene tree case.
Given a collection of gene trees $G_1, G_2, \dots, G_k$ and a network $N$. Let $\omega$ be a new species, called \emph{outgroup}, not present in $N$.
We first add the outgroup to every input tree.
Let $G_1^\omega=(G_1,\omega)$ and $G^\omega_i=((G_i,\omega),G^\omega_{i-1})$, for $i>1$.
Let $N^\omega$ be a network obtained from $N$ by inserting a new root and a leaf labelled by the outgroup and connecting the new root with the leaf and the root of $N$.
Then, by $\omega$-\NetEC we define the problem \NetEC with a single gene tree.
\begin{lemma}\label{lem:genNetEC}
Given at least two gene trees $G_1, G_2, \dots, G_k$ and a network $N$ such that $\omega \notin L(N)$. Then, $X \subseteq V(N)$ is the set of episodes that yields the solution of \NetEC for $G_1, G_2, \dots, G_k$ and $N$ if and only if
$X \cup \{\troot(N^\omega)\}$ is the set of episodes that yields the solution to the instance $G^\omega_k$ and $N^\omega$ of $\omega$-\NetEC.
\end{lemma}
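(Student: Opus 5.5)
We prove the lemma by a correspondence between valid mappings, in both directions, and then compare episode sets. Write $\rho=\troot(N^\omega)$ and note that $\hat{N^\omega}$ is $\hat N$ with a new root $\rho$ whose children are the $\omega$-leaf and $\troot(\hat N)$. We use two structural facts about $G^\omega_k$. Its internal nodes are (i) the nodes of the copies of $G_1,\dots,G_k$, (ii) the $k$ ``attachment'' nodes $a_i=(G_i,\omega)$, and (iii) the $k-1$ ``spine'' nodes $((G_i,\omega),G^\omega_{i-1})$ for $i\ge2$. For every scenario $\xi$ on $G^\omega_k$, each $\omega$-leaf maps to the unique $\omega$-leaf of $\hat{N^\omega}$ and the other leaves map into $\hat N$.

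\textbf{Step 1 (forced behaviour of the added nodes).} Each $a_i$ has lca-mapping $\rho$, and its children map to disjoint subtrees below $\rho$, so $a_i$ is a $\xi$-speciation. Each spine node has lca-mapping $\rho$ and a child ($a_i$) also mapped to $\rho$, so it is a $\xi$-duplication. Under any valid $\Mapf$, a duplication can only be raised, and nothing lies above $\rho$. Hence all spine nodes are assigned to $\rho$, and since $k\ge2$, $\rho$ is always an episode. Nodes inside a copy of $G_i$ lie strictly below the speciation $a_i$. By the fourth validity condition (fixed number of duplications), they must be mapped strictly below $\rho$, i.e.\ into $\hat N$.

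\textbf{Step 2 (restriction and gluing).} Given a valid $\Mapf$ on $G^\omega_k$ with scenario $\xi$, restrict $\Mapf$ and $\xi$ to each copy of $G_i$. The lca-mappings inside $G_i$ agree with those computed in $\hat N$, because lca's of nodes in $\hat N$ are the same in $\hat{N^\omega}$. So duplication/speciation status is preserved. The four validity conditions restricted to $G_i$ then hold, using Step 1 to guarantee that images stay in $\hat N$ and that the constraint against the speciation $a_i$ is automatically met. Conversely, given valid $\Mapf_1,\dots,\Mapf_k$ (with scenarios $\xi_i$), define $\Mapf$ on $G^\omega_k$ as follows: use $\Mapf_i$ on each $G_i$, send the $\omega$-leaves to the $\omega$-leaf, and send all $a_i$ and spine nodes to $\rho$. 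We check the validity conditions. Time consistency holds since $\rho$ dominates everything. Speciations $a_i$ are fixed at $\pi(\Map_\xi(a_i))=\rho$. Spine duplications satisfy $\Mapf\succeq \Map_\xi$ trivially. The last condition concerns duplications below speciations: inside $G_i$ it follows from validity of $\Mapf_i$, and against $a_i$ it holds because $\Mapf_i$ maps into $\hat N\prec\rho$. The main obstacle is this last condition, together with confirming that the scenario/lca structure inside $\hat N$ is unchanged by adding the new root. Both reduce to the observation that $\hat N$ is a full subtree of $\hat{N^\omega}$.

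\textbf{Step 3 (episodes).} Under the correspondence, $\epi(\Mapf)=\bigcup_i\epi(\Mapf_i)\cup\{\rho\}$, and $\rho\notin V(N)$. So the episode counts differ by exactly one, and minimizers correspond. Thus $X$ is the episode set of an optimal solution of \NetEC for $G_1,\dots,G_k$ and $N$ if and only if $X\cup\{\rho\}$ is the episode set of an optimal solution of $\omega$-\NetEC for $G^\omega_k$ and $N^\omega$. For the ``if'' direction, note that every solution of the $\omega$-instance contains $\rho$ by Step 1, so its episode set has the form $X\cup\{\rho\}$ with $X\subseteq V(N)$.
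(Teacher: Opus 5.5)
Your proof is correct and takes the same route as the paper's sketch. The paper argues that the root of $N^\omega$ is always an episode, that removing the inserted nodes turns a feasible solution of the outgrouped instance into one for the original instance, and that the converse is similar; you make each of these steps explicit, in particular that the attachment nodes $(G_i,\omega)$ are speciations at the root, which confines each $G_i$ to $\hat N$ and is what makes restriction and gluing preserve validity with an episode count shifted by exactly one.
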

For $k$ gene trees, the construction merges them into a single tree of size $O(k+M)$,
where $M = \sum_{i=1}^{k}|V(G_i)|$, so the overall time complexity becomes $O\!\left(n (k+M) 2^n\right)$ where $n = |V(N)|$.

\subsection{Post-evaluation: extended episodes analysis}

Our experiments revealed that duplications located below a reticulation node may cluster at the \emph{lowest stable ancestor} (LSA) of the reticulation (i.e., the lowest common ancestor of all parents) or higher nodes. This artifact results from the flexible mapping model, which permits scenarios utilizing both reticulation edges, combined with the episode minimization objective. The left part of Fig.~\ref{fig:netecepi1000} illustrates this: duplications from WGD events at $B$ are moved to the LSA (marked by a star).  Alg.~\ref{alg:extepisodes} addresses this by iteratively testing non-episode nodes in post-order and adding those whose episode size exceeds a threshold (e.g., the average episode size). Processing nodes bottom-up ensures that the lowest high-signal candidates are identified first, preventing their duplications from being absorbed by higher-level episodes.

\begin{algorithm}[!htbp]
\caption{Extended episodes analysis}\label{alg:extepisodes}
\small
\begin{algorithmic}[1]
\Require A set of gene trees over a network $N$.
\Ensure The set of episodes with extended episodes
\State Merge gene trees into a single tree (see Lemma~\ref{lem:genNetEC}).
\State Compute the set of episodes $Q$ using Alg.~\ref{alg:NetECsingle}.
\For{$s \in V(N) \setminus Q$ in post-order}
\State{Compute the episode size of $s$ using DP with $X = Q \cup \{s\}$}
\If{the size of episode $s$ exceeds threshold}
\State Add $s$ to $Q$ \Comment{$s$ becomes an extended episode}
\EndIf
\EndFor
\State \textbf{return} $Q$
\end{algorithmic}
\end{algorithm}

\section{Results}

We evaluate \NetEC on simulated data with known WGD ground truth and on an empirical Pandanales dataset comprising over 29,000 gene trees on a network with two reticulations. All experiments were performed using the \NetEC tool (\url{https://github.com/ppgorecki/netec}), a publicly available software package implementing the algorithms presented in this study; the input data, scripts, and parameter settings used to reproduce the experiments are also available via the same repository. The tool operates in two modes: \emph{discovery} mode for direct inference of duplication episodes, and \emph{verification} mode where users specify candidate WGD locations for validation.

\subsection*{Simulated data}

\emph{Data preparation.}
We sampled ultrametric tree-child networks of the height $1.8 \times 10^9$ years based on species trees with $20$ leaves and one reticulation event using the procedure described in~\cite{Rutecka24}, and selected a network $N$ where the reticulation connects temporally proximate lineages. The network $N$ is depicted in~Fig.~\ref{fig:netecepi1000} and~\ref{fig:netecepi1000fixb}, where the reticulation node representing hybridization event is marked as black circle.

We identified three candidate locations for WGD events, denoted $A$, $B$, and $C$.  Location $A$ is positioned close to but independent of the hybridization event, representing WGD that occurs without direct relationship to the hybridization process. Location $B$ is positioned below the reticulation in $N$ and therefore appears twice in $\hat{N}$, once on each descendant branch following hybridization. Location $C$ is evolutionarily distant, positioned on a branch unrelated to the hybridization event.

We analyzed seven scenarios in total: no WGD ($\emptyset$), single WGDs at locations $A$, $B$, or $C$, and all pairwise combinations ($AB$, $AC$, $BC$). Whole-genome duplication events were simulated following~\cite{gdilp}. For a given node $v$ in tree $\hat{N}$, a WGD event at $v$ was modeled by replacing the corresponding subtree $\hat{N}|v$ with its duplicated copy $(\hat{N}|v, \hat{N}|v)$. When the duplication event occurred below a reticulation, this substitution was applied symmetrically to both unfolded parts of the network. Consequently, we generated seven species trees from the unfolded network $\hat{N}$ for each WGD  scenario.

Our simulation study consists of several phases to infer~gene trees under biologically realistic conditions. A total of 1000 replicates were generated for each of seven species trees,~yielding 7,000 gene trees in total.
True gene trees were simulated using SimPhy~\cite{simphy} under a multilocus
coalescent~model incorporating incomplete lineage sorting and gene duplication/loss. We
employed a duplication/loss rate of $2 \times 10^{-10}$ events per year and an
effective population size of $N_e = 10^7$. Gene tree heights were drawn from a
lognormal distribution ($\mu = 1.5$, $\sigma = 1$). DNA sequences of 1000 bp were simulated along the true gene trees using AliSim~\cite{iqtree2} under the GTR+$\Gamma$ model. Model parameters were sampled from empirical Dirichlet priors following~\cite{molloy2020}. Insertions and deletions were simulated with rates of 0.03 and 0.09 per substitution, respectively, with lengths following a Zipfian distribution (exponent 1.7, maximum length 50). Sequences were aligned using MAFFT~\cite{mafft}, and maximum likelihood trees were estimated from these inferred alignments using PhyML~\cite{phyml}
under the GTR+$\Gamma$ model with estimated para\-meters.
The unrooted trees were rooted by~midpoint-plateau rooting implemented in URec~\cite{urec}. All datasets were processed by the \NetEC tool within 2 hours~on~a~standard workstation.

\emph{Results.}
The results of the discovery mode (Alg.~\ref{alg:NetECsingle}) applied to the simulated datasets are depicted in Fig.~\ref{fig:netecepi1000}, where inferred episode sizes are presented as histograms attached to corresponding nodes of network $N$.

Our solution minimizes the \emph{number} of duplication episode locations, not the episode sizes themselves. Because the simulation includes background single-gene duplications and incomplete lineage sorting (ILS), small non-zero episode sizes appear at many nodes even in the absence of any WGD. These counts reflect the spread of background duplications and should not be interpreted as evidence of WGD events. Only episodes with substantially elevated counts provide evidence for WGD.

For WGD scenarios $A$, $C$, and $AC$, NetEC tool successfully inferred the events in all experimental settings where they were simulated. These scenarios exhibited strong signals exceeding 1,000 duplications, whereas in evolutionary scenarios lacking the corresponding WGD, the signal matched the background episode sizes of the null scenario ($\emptyset$). WGD events at locations $A$ and $C$ were also correctly identified with significant episode sizes in scenarios $AB$ and $BC$, respectively.

In contrast, the event at location $B$ was not detected in scenarios $B$, $BC$, and $AB$. We observed that duplications from location $B$ were likely reassigned upward to the LSA node (marked by blue star in Fig.~\ref{fig:netecepi1000}) and partially to the LSA's left~child.

We then performed extended episode inference by examining additional candidates among non-episode nodes (Alg.~\ref{alg:extepisodes}). The results are summarized in the bottom-right panel of Fig.~\ref{fig:netecepi1000}, where the method identified node $B$ as the only extended episode with high support (above 2,000 duplications) in WGD scenarios $B$, $BC$, and $AB$. For the remaining scenarios, episode sizes were small, indicating that the inference was complete.

In the subsequent analysis, we employed NetEC tool with the extended episode at location $B$ for the three WGD scenarios $B$, $BC$, and $AB$. The results are shown in the right side of Fig.~\ref{fig:netecepi1000}, where the number of duplications at location $B$ correctly indicates the simulated WGD event. We also observed that the number of duplications at the LSA node and its child were significantly reduced compared to the previous analysis.

Finally, performing extended episode analysis on this inference did not identify any additional episode candidates (see bottom-right panel of Fig.~\ref{fig:netecepi1000}).

\begin{figure*}
\centering
\includegraphics[width=.99\textwidth]{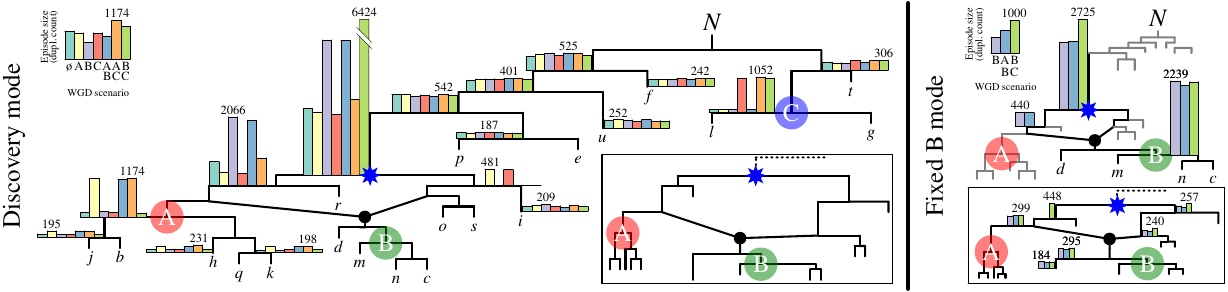}
\caption{\textbf{Inference of duplication episodes from seven simulated WGD scenarios on phylogenetic network $N$ (one reticulation node, black circle).} Nodes $A$, $B$, $C$ indicate simulated WGD positions across seven scenarios: $\emptyset$, $A$, $B$, $C$, $AB$, $AC$, $BC$. Histograms show inferred episode sizes (number of gene duplications); values above indicate maximum frequency; histograms with all values below 150 are omitted. \textbf{Left:} Discovery mode results. \textbf{Middle frame:} Extended analysis (Alg.~\ref{alg:extepisodes}) reveals node $B$ as an additional high-support episode in scenarios $B$, $AB$, and $BC$. \textbf{Right:} Results after fixing node $B$ as an extended episode in scenarios $B$, $AB$, and $BC$. \textbf{Right frame:} Extended analysis confirms no further episode candidates. Note that branch lengths of $N$ are adjusted for clarity across all visualisations and do not reflect its ultrametric structure.}
\label{fig:netecepi1000}
\label{fig:netecepi1000fixb}
\end{figure*}

\begin{figure}
\centering
\includegraphics[width=.99\columnwidth]{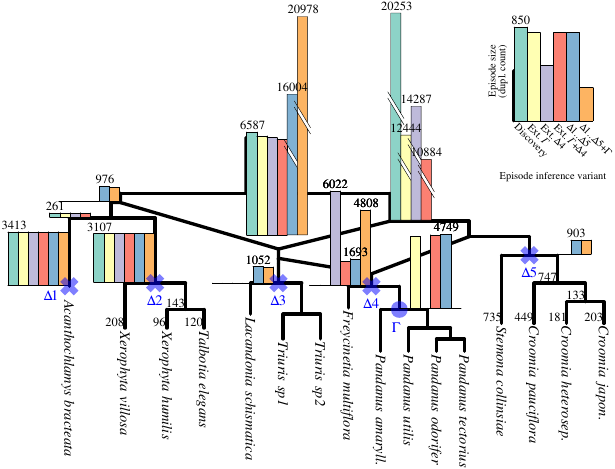}
\caption{\textbf{Episode inference on empirical Pandanales gene trees on a phylogenetic network with two reticulations.}
  The network follows~\cite{shi2025resolving} with five hypothesized WGD events ($\Delta$1--$\Delta$5), blue
  crosses) and an additional candidate node $\Gamma$ (blue circle).
  Bar charts at nodes display episode sizes under six inference variants: discovery mode
   alone, extended analysis forcing $\Gamma$, extended analysis forcing $\Delta$4, combined extended analysis
  ($\Gamma$+$\Delta$4), hypothesis-driven mode with all five published WGDs ($\Delta$1--$\Delta$5), and
  hypothesis-driven mode including $\Gamma$ ($\Delta$1--$\Delta$5+$\Gamma$).
  Numbers above bars indicate episode sizes, while the numbers near nodes without histograms represent equal duplication count obtained in all six inferences.
  Results show strong support for $\Delta$1, $\Delta$2, and either $\Delta$4 or $\Gamma$ (which absorbs signal from
  $\Delta$4 when included), with moderate support for $\Delta$5.
  Deep ancestral nodes accumulate duplications due to upward clustering permitted by reticulation-induced scenario
  flexibility.}
\label{fig:disco}
\end{figure}

\subsection*{Empirical evaluation: Pandanales}

\emph{Data preparation.}
We investigated the placement and support of WGD events in a reticulate evolutionary history of Pandanales. Starting from the Pandanales species tree, we extended it to a phylogenetic network by introducing two reticulation events inferred by Shi et al.~\cite{shi2025resolving} using HyDe-based gene-flow tests. As the duplication signal, we used 29,453 gene trees reconstructed from sequence data available in~\cite{shi2025resolving}; sequences were processed and partitioned into families following the pipeline described in that study, with gene trees inferred using IQ-TREE under the GTR+$\Gamma$ model. We performed two experiments: a discovery run and a hypothesis-driven run in which the five published WGD nodes ($\Delta$1--$\Delta$5, Fig.~\ref{fig:disco}) were provided as user-specified episodes to test whether duplications concentrate at those locations. The analysis by \NetEC tool, including all inference variants, completed within 12~hours.

\emph{Results.}
In discovery mode (Fig.~\ref{fig:disco}; first bars), support concentrates at deep ancestral nodes (right child of root: 20253; root: 6587), reflecting upward clustering below reticulations. Among the published WGDs, $\Delta$1 and $\Delta$2 are directly recovered (3413 and 3107), while $\Delta$3--$\Delta$5 are not. Extended analysis  identified only two high-signal candidates: $\Gamma$ (4599) and $\Delta$4 (6022). Forcing episodes at these nodes individually and jointly (Fig.~\ref{fig:disco}) reveals that $\Delta$4 absorbs more duplications alone, but $\Gamma$ captures the larger local fraction when both are included (4696 vs.\ 1524 at $\Delta$4). This illustrates that extended episodes are best applied iteratively: inserting a high-signal episode and recomputing reveals whether a residual peak warrants an additional episode.

In hypothesis-driven mode, inserting $\Delta$1--$\Delta$5 as user-specified episodes yields strong support for $\Delta$1, $\Delta$2, and $\Delta$4, moderate for $\Delta$5, and weaker for $\Delta$3, with a large fraction remaining at deep nodes. Extended analysis further suggests $\Gamma$ as an additional episode (4749), reducing $\Delta$4 to 1693 and indicating $\Gamma$ as a more appropriate WGD location.

Taken together, minimizing episode count can favor deeper placements when reticulation allows duplications to shift upward, and extended analysis is essential for revealing biologically relevant WGD candidates.

  \section{Discussion and Conclusions}

  Here we propose the first algorithmic framework for identifying duplication episodes in phylogenetic networks. Experiments demonstrated that our approach accurately recovers genomic duplication events in the presence of reticulation, though with important caveats. Events on lineages not directly involved in hybridization are
  consistently detected with strong signal. In contrast, WGD events below reticulation nodes require the extended  episode procedure for reliable detection, as the flexible mapping model allows their duplications to be reassigned
  to ancestral positions. The extended analysis should therefore be considered a standard component of the inference pipeline.

  While the feasibility test runs in polynomial time, the overall algorithm has exponential worst-case complexity; larger networks may require heuristic sampling, yielding upper bounds rather than guaranteed optima. Our model
  assumes correctly rooted gene trees with known leaf-to-taxa mappings; extending it to handle unrooted trees or uncertain taxonomic assignments would broaden applicability to metagenomics. The framework does not explicitly model
   gene losses, which follow WGD events at high rates during diploidization; incorporating losses could improve   detection of ancient events but would likely increase computational complexity.   Incomplete lineage sorting (ILS) is another confounding factor, that can produce gene tree topologies that resemble
  duplications, potentially contributing to background episodes even in the absence of WGD.
  Our simulations include ILS through the multilocus coalescent model, and the background signal observed in the null
  scenario reflects this effect; however, a systematic evaluation of how varying ILS levels affect episode inference, particularly the ability to distinguish true WGD signal from ILS noise, remains a valuable  direction for future work. The model also does not exploit synteny information, i.e.~the chromosomal position and gene order.
  Synteny may provide complementary evidence for distinguishing whole-genome from small-scale duplications and could help refine episode placement; integrating synteny-aware constraints is another avenue for future extension.

  Future directions include establishing the computational complexity of \NetEC problem, which we conjecture to be intractable for general networks, integration with network inference methods for joint topology-episode estimation, statistical frameworks for assessing episode support beyond size metrics, and extension to time-calibrated networks
  where temporal constraints could reduce placement ambiguity. As phylogenomic datasets grow, methods accommodating
  network phylogenies will be essential for understanding complex gene family histories.
  The presented solution represents a step toward this goal, and the publicly available implementation enables both exploratory analysis and targeted evaluation of specific WGD hypotheses.

\section*{Acknowledgements}

Financial support was provided by the National Science Centre grant \#\opusUniNet.

\bibliographystyle{abbrv}
\bibliography{references}

\clearpage

\section*{Appendix: Lemma~\ref{lem:DPcor} - DP correctness}\label{app:lem:DPcor}

\begin{lemma}
\label{lem:DPcor}
Given a gene tree $G$ over a network $N$ and $X \subseteq V(N) \setminus R(N)$.
Let $g \in V(G)$, $s \in V(N)$ and $\dot{s}=s'$ if $s \in R(N)$, and $\dot{s}=s$, otherwise. Then,
\begin{itemize}
\item[P1] $\delta(g,s)=\True$ if and only if there is a feasible mapping $\Mapf$ for $(g,\dot{s},X)$ such that $g$ is an lca-duplication assigned by $\Mapf$ to $\dot{s}$.

\item[P2] $\delta(g,s)=\Unknown$ if and only if there is no feasible mapping for $(g,\dot{s},X)$, but there is a weakly feasible mapping $\Mapf$ for $(g,\dot{s},X)$ such that $g$ is an lca-duplication assigned by $\Mapf$ to $\dot{s}$.

\item[P3] $\sigma(g,s)=\True$ if and only if there is
a feasible mapping $\Mapf$ for $(g,\dot{s},X)$ such that $g$ is a speciation or a leaf  assigned by $\Mapf$ to $\dot{s}$.

\item[P4] For any $g$ and $s$, $\sigma(g,s) \neq \Unknown$.

\item[P5] $\delta^\downarrow(g,s)$ is $\True$ if and only if
there is a feasible mapping for $(g,\dot{s},X)$.

\item[P6] $\delta^\downarrow(g,s)$ is $\Unknown$ if and only if
there is no feasible mapping for $(g,\dot{s},X)$, but there is a weakly feasible mapping for $(g,\dot{s},X)$.
\end{itemize}
\end{lemma}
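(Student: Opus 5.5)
We argue by simultaneous induction on the pair $(g,s)$: first on the height of $g$ in $G$, and for a fixed $g$ on the height of $s$ in $N$. We compute the height of $s$ in the DAG $N$, which is well defined since $N$ is acyclic. All six properties P1--P6 are proved together for each pair, because the formulas are mutually recursive. $\delta(g,s)$ and $\sigma(g,s)$ refer only to the children $g',g''$. $\delta^\downarrow(g,s)$ refers to $\epsilon(g,s)$ and to $\delta^\downarrow(g,c)$ for the children $c$ of $s$. Reticulations are handled first via~\eqref{vdretcase}. Since $\dot{s}=s'$ is a non-reticulation node, and feasibility for $(g,s',X)$ is exactly what P1--P6 demand at $s$, the reticulation case reduces to the pair $(g,s')$, which has smaller height. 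Here we use $X\cap R(N)=\emptyset$. From now on $s=\dot{s}$ is a tree node or a leaf.

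P3 and P4 come first, since they are the simplest. P4 is immediate, because $\Lop$ and the constants never return $\Unknown$. For P3 with $g$ a leaf, the only valid option is $\Mapf(g)=\xi(g)$, a leaf labelled as $s$. For $g$ internal and $s$ a tree node, a speciation of $g$ at $s$ in some scenario $\xi$ means that the lca images of $g'$ and $g''$ lie below distinct children of $s$. Here we use the correspondence between copies of $s$ in $\hat{N}$ and their two child subtrees, given by $\pi$. Feasibility then splits into feasible mappings for $(g',s')$ and $(g'',s'')$, or the swap. The two sub-mappings live in disjoint subtrees of $\hat{N}$ below a single copy of $s$. The scenarios can therefore be glued, by choosing the same copy of $s$ and pulling the copies of the subtrees along $\pi$. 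Validity of the glued map is checked condition by condition. The key one is the "fixed number of duplications" condition: $\Mapf(a)\prec \Map_\xi(g)$ holds because everything lies strictly below the chosen copy. Finally, the $\Lop$ enforces the requirement that all duplications, including upper ones of the children, lie in $X$. An $\Unknown$ child value would mean that the upper duplications of $g'$ sit outside $X$ but can never be raised past the speciation $g$.

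For P1 and P2 (with $g$ internal), a duplication $g$ at lca position $\dot{s}$ requires one child to map to $\dot{s}$ itself, either as a speciation or as an lca-duplication, which is $\epsilon(\cdot,s)$. The other child must map anywhere weakly below $s$, which is $\delta^\downarrow(\cdot,s)$. This gives $\delta^*$. Here I would use the induction hypothesis for $\delta^\downarrow(g'',s)$, which is legitimate since $g''$ has smaller height. The case split in~\eqref{v1d}/\eqref{v2d} follows from the definitions. If $s\in X$, the duplication $g$ is allowed, so $\True$ or $\Unknown$ values of the children pass through unchanged. The upper duplications of the children stay upper, and remain constrained from above. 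If $s\notin X$, then $g$ itself is an upper duplication outside $X$, so the best possible value is $\Unknown$ (weakly feasible), hence $\wedge\Unknown$. Conversely, any feasible or weakly feasible mapping with $g$ an lca-duplication at $s$ decomposes this way.

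P5 and P6 carry the main difficulty. A feasible mapping for $(g,s)$ either places $g$ exactly at $s$ (captured by $\epsilon$) or places it strictly below, hence below some child $c$. In the latter case the mapping is feasible for $(g,c)$, giving the disjunction over $ch(s)$. The subtle point is the $\Mop$ in~\eqref{v2dd}. When $s\in X$ and the mapping for $(g,c)$ is only weakly feasible, the upper duplications outside $X$ can be raised to $s\in X$. This turns weak feasibility into feasibility, so $\Unknown$ becomes $\True$. We must check that raising is valid. All raised nodes are upper duplications, so no speciation lies between them and $g$. Their new image $s$ (a copy in $\hat{N}$ above the current one) remains below the image of the nearest speciation ancestor: that ancestor is outside $G|g$ and is only constrained to be above $s$. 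Time consistency is preserved since all upper duplications move to the same node. Conversely, we must show that every feasible mapping using an episode at $s$ for some upper duplications arises this way. To do so, I would lower those duplications to their lca positions, or to the nearest admissible point below $s$, to get a weakly feasible mapping below a child. Without $s\in X$ no conversion is possible, which gives~\eqref{v1dd}.

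Theorem~\ref{lem:DPcormain} then follows from P5 at the roots. This is because $\Mapf^*(\troot(G))\preceq\troot(N)$ is vacuous, and feasibility for $(\troot(G),\troot(N),X)$ is exactly $X$-feasibility. I expect the main obstacle to be the rigorous gluing and raising of mappings in $\hat{N}$ through $\pi$. In particular, one must verify that the choices of scenario for the two children can always be combined under the same copy of $s$.
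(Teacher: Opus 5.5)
Your outline follows the paper's proof closely. It uses induction on the pair, reduces reticulations via \eqref{vdretcase}, glues the children under a common copy of $s$ (the paper's shift $F_w$ and the products $\oplus$, $\otimes$), reads $\Mop$ as raising upper duplications to $s\in X$, and lowers a non-lca duplication placed at $s$.

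The real gap is in your P1/P2 paragraph. There you assert that for $s\in X$ an $\Unknown$ child value ``passes through unchanged'' and that the children's upper duplications ``remain constrained from above''. That is wrong. If $g$ is a duplication at a copy of $s\in X$, every upper duplication of $G|g'$ or $G|g''$ is also upper in $G|g$. It can therefore be moved to that copy of $s$, which is exactly the raising you justify for $\Mop$. So any such weakly feasible mapping upgrades to a feasible one, with $g$ still an lca-duplication at $s$. Hence, if $\delta(g,s)=\Unknown$ with $s\in X$ could occur, P2 ($\Rightarrow$) would fail.

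What you are missing is the invariant that for $s\in X$ none of $\epsilon(x,s)$, $\delta(x,s)$, $\delta^\downarrow(x,s)$ is ever $\Unknown$. This makes the $s\in X$ case of P2 ($\Rightarrow$) vacuous. The paper obtains it from (A1), (A2) and Case P2.a. It also follows by a direct induction on $x$: $\sigma$ is always two-valued, and at $s\in X$ every value taken at a child of $s$ appears under $\Lop$ or $\Mop$. You need the invariant again in P6 ($\Rightarrow$). In \eqref{v2dd} the term $\epsilon(g,s)$ is not guarded by $\Mop$, so only with the invariant can you conclude that $\Unknown$ arises solely from \eqref{v1dd} with $s\notin X$.

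A minor point: when raising, a copy of $s$ above $F(g)$ in $\hat{N}$ need not exist, because $F^*(g)\preceq c$ holds only in $N$. Shift the mapping first, as with $F_w$.

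Be careful also with the first clause of P2. Read literally, as ``no feasible mapping for $(g,\dot s,X)$ at all'', it is false. It has to be read as ``no feasible mapping in which $g$ is an lca-duplication at $\dot s$'', which for $s\notin X$ is automatic, in line with your $\wedge\Unknown$ remark. Here is a counterexample to the literal reading:
\begin{itemize}
\item Let $N$ have root $s$ with children $p$ and $q$, where $p$ has children $A$ and $h$, $q$ has children $B$ and $h$, and the reticulation $h$ has the leaf child $C$.
\item Let $G$ have root $r$ whose children are the cherry $x=(A,C)$ and a leaf labelled $C$, and let $X=\emptyset$.
\item Routing the $C$ of the cherry through $q$ makes $x$ a speciation at $s$. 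So $\epsilon(x,s)=\True$, $\delta^*(r,s)=\True$ and $\delta(r,s)=\Unknown$.
\item Routing that $C$ through $p$ and the other $C$ through $q$ instead makes $r$ a speciation at $s$ with no duplications. This is a feasible mapping for $(r,s,X)$.
\end{itemize}
The paper's Case P2.b.1 only excludes feasible mappings in which $g$ is an lca-duplication at $s$. Consequently, in P6 ($\Rightarrow$) you should not cite P2 for ``no feasible mapping for $(g,s,X)$''. Instead, derive it from three facts: $\sigma(g,s)=\False$; $s\notin X$, which rules out $g$ being any duplication at $s$; and $\delta^\downarrow(g,c)\neq\True$ for every child $c$ of $s$.
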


The auxiliary functions $\epsilon$ and $\delta^*$ are treated as local (no separate arrays in implementation). If $\epsilon(g,s)=\True$, there is a feasible mapping $\Mapf$ for $(g,s,X)$ with $\Mapf^*(g)=s$; if $\epsilon(g,s)=\Unknown$, then $\sigma(g,s)=\False$ and $\delta(g,s)=\Unknown$, and $g$ is still assigned to $s$. $\delta^*(g,s)=\True$ requires a feasible mapping where $g$ is a duplication at $s$ with at least one child also mapped to $s$ (modelled by $\epsilon$); $\delta^*(g,s)$ resembles $\delta(g,s)$ but constrains only non-upper duplications when $\Unknown$.

\newcommand{\mt}{\Mapf}
\newcommand{\mtp}{\Mapf'}
\newcommand{\mtpp}{\Mapf''}
\newcommand{\vn}{V(\hat{N})}

\newcommand{\forg}{(g,s,X)}
\newcommand{\forgp}{(g',s,X)}
\newcommand{\forgpp}{(g'',s,X)}
\newcommand{\Mapfgp}{\Mapf|g'}
\newcommand{\Mapfgpp}{\Mapf|g''}

\subsection*{Preliminary properties}

Before starting the main proof, we need several additional notions and properties.

To preserve validity of mappings when merging mappings from subtrees, we need an operation that shifts a mapping in unfolded network to a proper copy of a subtree in $\hat{N}$. For a valid mapping $F \colon V(G|g) \rightarrow \vn$ such that $F(g)=v$
and $w \in \pi^{-1}(v)$, by $F_w \colon V(G|g) \rightarrow \vn$ we denote
a mapping defined using a top-down approach. Let $F_w(g):=w$.
If $x$ is a child of $y \preceq g$, then the path $p_1,p_2,\dots,p_k$ from $p_1=F(y)$ to $p_k=F(x)$ induces the path $P = \pi(p_1),\dots,\pi(p_k)$ in $N$. If $F_w(y)$ is already determined, then $F_w(x)$ is the unique node in $\hat{N}$
such that the path from $F_w(y)$ to $F_w(x)$ in $\hat{N}$ induces the path equal to $P$ in $N$. Note that $\pi(F)=\pi(F_w)$, however, the main property is stronger.
It follows from the construction that both mappings induce the same paths in a network $N$.

\begin{figure}[!htbp]
\centering\includegraphics[width=0.45\textwidth]{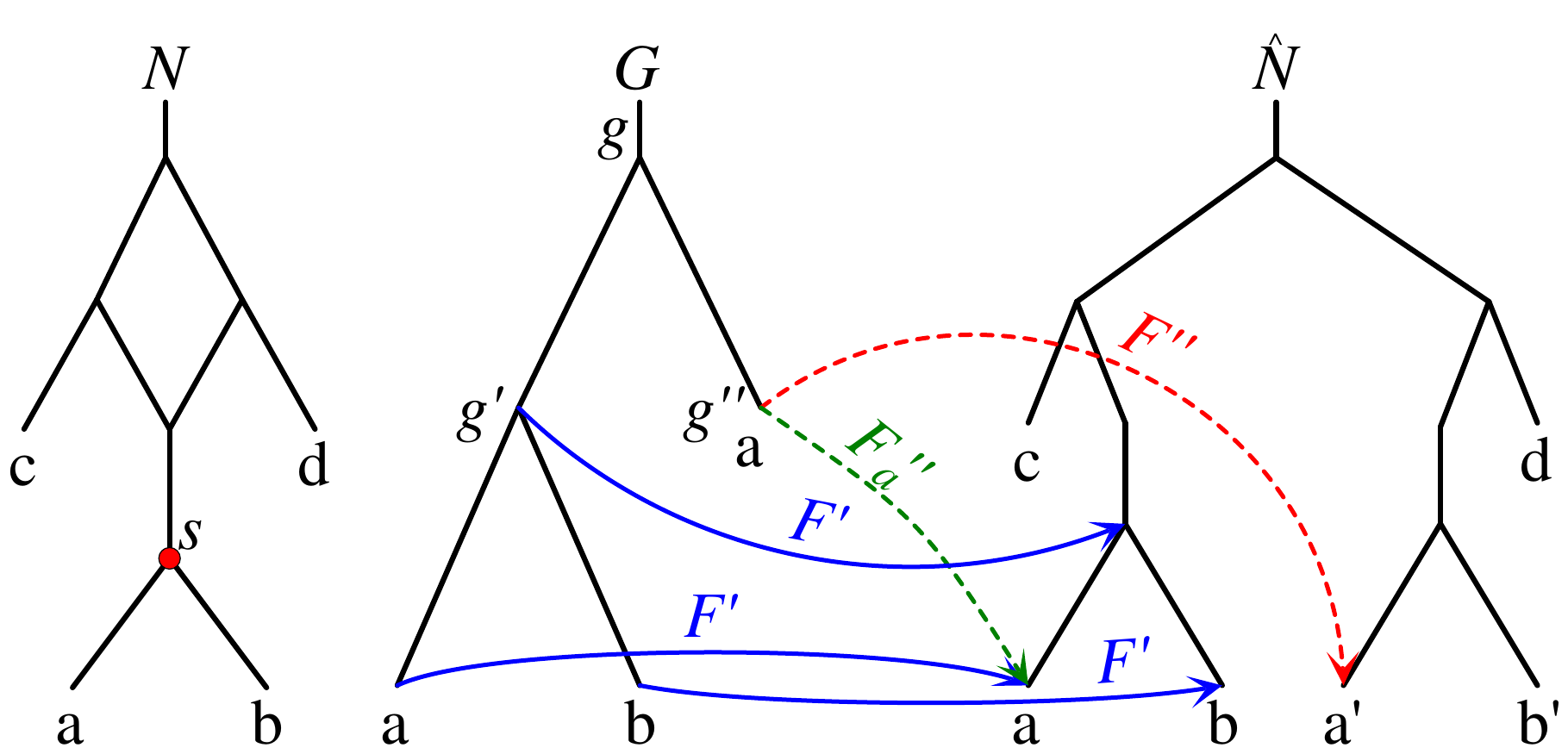}
\caption{Network $N$ and mappings $F'$, $F''$ and $F''_a$.
$F'$ is valid for $(g',s,X)$, while $F''$ and $F''_a$ are valid for $(g'',s,X)$, where $X=\{s\}$. Note that $F' \oplus F''$ is valid for $(g,s,X)$, but $F''$ has to shifted (see  $F''_a$) such that $g''$ maps to $a$ (not $a'$) in $\hat{N}$. Then, in $F' \oplus F''$, $g$ is a duplication assigned to $s$. Without the shift, the resulting product of $F'$ and $F''$ would map $g$ to the root of $\hat{N}$.}
\label{fig:shift}
\end{figure}

Since functions are relations, we identify a function $f \colon A \to B$
with the sets of pairs $\{ (x, f(x)) \colon x \in A \}$.
Let $\mtp \colon V(G|g') \to V(\hat{N})$ and $\mtpp \colon V(G|
g'')\to V(\hat{N})$ be valid mappings for $(g',s,X)$ and $(g'',s,X)$, respectively,  such that $\mtp^*(g') \succeq \mtpp^*(g'')$. Then, by $\mtp \oplus \mtpp$
we denote the mapping $\mtp|_{V(G|g') \setminus \dsetupper(\mtp)} \cup
 \mtpp_w|_{V(G|g'') \setminus \dsetupper(\mtpp)}
\cup \{(g,v)\} \cup (\dsetupper(\mtp) \cup \dsetupper(\mtpp)) \times \{v\}$ where $\mtp(g)=v$ and $\mtpp_w \colon V(G|g'') \arrow V(\hat{N})$ is the copy of $\mtpp$ shifted to be present below $v$ in $\hat{N}$, where $w$ is a node such that $\pi(w)=\mtpp^*(g'')$ and $v \succeq w$. Such a node exists since $\mtp^*(g') \succeq \mtpp^*(g'')$.
\begin{lemma}
\label{lem:otimes}
Let $\mtp \colon V(G|g') \to V(\hat{N})$ and $\mtpp \colon V(G|
g'')\to V(\hat{N})$ be valid mappings for $(g',s,X)$ and $(g'',s,X)$, respectively, such that $\mtp^*(g') \succeq \mtpp^*(g'')$. Then, $\mtp \oplus \mtpp$ is valid for a  $(g,s,X)$ with $g$ being a duplication assigned to~$s$.
\end{lemma}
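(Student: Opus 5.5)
We verify the validity conditions for $\Mapf := \mtp \oplus \mtpp$ directly, after first fixing a scenario for it. Set $v=\mtp(g)$, and let $w$ be the node with $\pi(w)=\mtpp^*(g'')$ and $v \succeq w$ used to build the shifted copy $\mtpp_w$.

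\textbf{Step 1: the scenario.} Let $\xi'$ and $\xi''$ be the scenarios witnessing validity of $\mtp$ and $\mtpp$. Since $\mtpp_w$ induces the same paths in $N$ as $\mtpp$, its restriction to leaves, call it $\xi''_w$, is again a scenario: labels are preserved because $\pi$ preserves labels. The lca-mapping $\Map_{\xi''_w}$ is the $w$-shift of $\Map_{\xi''}$. Put $\xi=\xi'\cup\xi''_w$. On $G|g'$ the lca-mapping $\Map_\xi$ agrees with $\Map_{\xi'}$, and on $G|g''$ it agrees with $\Map_{\xi''_w}$. At $g$ we have $\Map_\xi(g)=\lcaf(\Map_{\xi'}(g'),\Map_{\xi''_w}(g''))$. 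Consequently every $\xi'$-speciation in $G|g'$ is a $\xi$-speciation, and likewise for $G|g''$, so the fixed-speciation condition is inherited on each side via the shift property $\pi(F)=\pi(F_w)$.

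\textbf{Step 2: $g$ is a duplication at $v$.} Both $\Map_{\xi'}(g')$ and $\Map_{\xi''_w}(g'')$ lie below $v$: the first because $\mtp(g')=v \succeq \Map_{\xi'}(g')$, the second because $v \succeq w \succeq \Map_{\xi''_w}(g'')$. Hence $\Map_\xi(g)\preceq v$. The claim that $g$ is a $\xi$-duplication, i.e.\ that the lca equals one child's lca image, is the delicate point. Here I would use the roles of $\mtp$ and $\mtpp$:
\begin{itemize}
\item if $g'$ is a $\mtp$-speciation or leaf, then $v=\Map_{\xi'}(g')$, so the lca is exactly $v$ and equals $\Map_\xi(g')$;
\item otherwise $g'$ is a duplication, and I would argue that $\Map_\xi(g)$ coincides with one of the children's lca images, possibly after re-choosing $w$ as a lowest copy.
\end{itemize}
I expect this to be the main obstacle. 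The likely intended reading is that $\Mapf$ places $g$ at $v$ as a duplication above its lca, and $\Mapf^*(g)=\mtp^*(g')\preceq s$. The argument should make this precise by choosing $w$ so that the induced embedding is consistent.

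\textbf{Step 3: remaining conditions.} The upper duplications of $\mtp$ and $\mtpp$, together with $g$, are all sent to $v$. They stay duplications, since no speciation lies on their path to $g$. The raising condition holds because $v\succeq\Map_\xi$ of each of them. For time consistency, the non-upper parts keep their images, which lie below the images of their upper ancestors, and these ancestors are now at $v$, higher or equal. The condition "$\Mapf(a)\prec\Map_\xi(b)$ for speciation ancestors $b$" is only required for speciations, and no new speciations above the moved nodes are created, so it is inherited. Finally $\Mapf^*(g)=\pi(v)=\mtp^*(g')\preceq s$, and $g$ is assigned there, which gives validity for $(g,s,X)$.
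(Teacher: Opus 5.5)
Your Step~2 has a genuine gap, and re-choosing $w$ cannot close it. When $g'$ is a \emph{raised} duplication under $F'$, the lemma as literally stated is false.

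Here is a counterexample.
\begin{itemize}
\item Let $N$ be a cherry with root $s$ and leaves $a,b$ labelled $A,B$. Then $\hat{N}=N$ and $w$ is forced.
\item Let $G|g'=(A,A)$ and let $g''$ be a leaf labelled $B$.
\item Let $F'$ send both $A$-leaves to $a$ and raise the duplication $g'$ to $s$. Let $F''(g'')=b$.
\end{itemize}
Both mappings are valid, even feasible for $X=\{s\}$, and $F'^*(g')=s\succeq b=F''^*(g'')$. The scenario of $F'\oplus F''$ is forced to be its restriction to the leaves. Hence $\Map_\xi(g)=\lcaf(a,b)=s$, which differs from both $\Map_\xi(g')=a$ and $\Map_\xi(g'')=b$. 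So $g$ is a $\xi$-speciation. The duplication $g'$ then sits at $s=\Map_\xi(g)$, violating the condition $F(g')\prec\Map_\xi(g)$, so $F'\oplus F''$ is not valid at all. This also shows that your Step~3 claim that ``no new speciations above the moved nodes are created'' is not an independent check; it is exactly what Step~2 must supply.

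What is missing is a stronger hypothesis on $F'$: $F'(g')=\Map_{\xi'}(g')$ and $\pi(F'(g'))=s$. In other words, $g'$ is a leaf, a speciation, or an lca-duplication placed at $s$. Every application in the proof of Lemma~\ref{lem:DPcor} provides this: P1~($\Rightarrow$), P2.b.1 and P2.b.2 all obtain $F'$ from $\epsilon(g',s)\neq\False$. The paper's one-line proof silently relies on it.

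Under this hypothesis your first bullet covers every case. We have $\Map_\xi(g')=v$ and $\Map_\xi(g'')\preceq w\preceq v$. Hence $\Map_\xi(g)=v=\Map_\xi(g')$, so $g$ is an lca-duplication with $F^*(g)=\pi(v)=s$.

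You only derived $F^*(g)=F'^*(g')\preceq s$, whereas the statement claims $g$ is assigned to $s$. The hypothesis $\pi(F'(g'))=s$ is needed for that as well. With it, your Steps~1 and~3 go through, because the only ancestors of the relocated upper duplications are upper duplications and the duplication $g$.
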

\begin{proof}
It follows from the definition of the mapping.
\end{proof}
Note that, without the shift operation, $g$ may be assigned to a node above $s$. See example in Fig.~\ref{fig:shift}.

Similarly, to $\mtp \oplus \mtpp$, we define a mapping $\mtp \otimes \mtpp \colon V(G|g) \rightarrow \vn$.
Here, $\mtp$ is a mapping feasible for $(g',s',X)$, $\mtpp$ is a mapping feasible for $(g'',s'',X)$, and $s'$ and $s''$ are the children of $s$ in $N$. In such a case the mapping is defined as $\mtp \cup \mtpp_{w} \cup \{(g,v)\}$,
where $v$ is the parent of $F(g')$ in $\hat{N}$ (note that $\pi(v)=s$)
and $w$ is defined as in the previous paragraph. The proof of the next Lemma follows immediately from the definition of the mapping.
\begin{lemma}
\label{lem:otimes2}
Let $\mtp$ and $\mtpp$ be mappings feasible for $(g',s',X)$ and $(g'',s'',X)$,  respectively, where $s'$ and $s''$ are the children of $s$ in $N$.
Then, $\mtp \otimes \mtpp$ is valid for a gene tree $G|g$, where $g$ is a speciation assigned to $s$.
\end{lemma}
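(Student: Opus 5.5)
The statement to prove is Lemma~\ref{lem:otimes2}. The plan is to unfold the definition of $\mtp \otimes \mtpp$ and check the four validity conditions directly. First we fix a scenario for $G|g$. Let $\xi'$ and $\xi''$ be scenarios witnessing the validity of $\mtp$ and $\mtpp$. Let $v$ be the parent of $\mtp(g')$ in $\hat{N}$, and $w$ the node below $v$ with $\pi(w)=\mtpp^*(g'')$. Such a $w$ exists because $\mtpp^*(g'')\preceq s''$ and $v$ has a child projecting to $s''$.

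Since the unfolded network is a semi-binary tree and the root--leaf paths of $N$ and $\hat{N}$ correspond bijectively under $\pi$, the shifted mapping $\mtpp_w$ is again valid. Its scenario $\xi''_w$ is obtained by following, below $w$, the same $N$-paths as $\xi''$. Set $\xi=\xi'\cup\xi''_w$. Then $\xi(l)$ carries the label of $l$, so $\xi$ is a scenario for $G|g$.

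Next we compute the lca-mapping. On $G|g'$ we have $\Map_\xi=\Map_{\xi'}$, and on $G|g''$ we have $\Map_\xi=\Map_{\xi''_w}$, because lca-mappings of a subtree depend only on the leaves of that subtree. All images on the $g'$ side lie below the child $v_1$ of $v$ with $\pi(v_1)=s'$. All images on the $g''$ side lie below the child $v_2$ with $\pi(v_2)=s''$. Here we should treat a semi-binary (reticulation-copy) node above $v_1$ or $v_2$ carefully, arguing that $v$ is the branching tree node projecting to $s$.

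Hence $\Map_\xi(g)=\lcaf(\Map_\xi(g'),\Map_\xi(g''))=v$, and $g$ differs in image from both children, so $g$ is a $\xi$-speciation. Also $\pi(v)=s$, so $g$ is assigned to $s$, as the lemma claims.

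Now we verify the validity conditions node by node. Fixed speciations, raising of duplications and the count condition for nodes inside $G|g'$ and $G|g''$ are inherited from $\mtp$ and $\mtpp_w$, since $\Map_\xi$ agrees with the sub-lca-mappings there. For $g$ itself, $\Mapf(g)=v=\Map_\xi(g)$, which gives fixed speciation. Time consistency between $g$ and its descendants holds because all images lie below $v_1$ or $v_2$, hence strictly below $v$.

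The remaining condition, the fixed number of duplications, needs $\Mapf(a)\prec\Map_\xi(g)=v$ for every duplication $a$ below $g$. This follows from the same containment in the subtrees of $v_1$ and $v_2$. It is the only new constraint involving the added speciation $g$.

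The main obstacle is the shift step: showing that $\mtpp_w$ is well defined and valid. It must preserve the speciation/duplication status and the inequalities of the validity conditions. I would argue this by induction top-down along $G|g''$. The path-copying map is a tree isomorphism from $\hat{N}|\mtpp(g'')$ onto $\hat{N}|w$, since both subtrees are unfoldings of the same subnetwork below $\mtpp^*(g'')$. An isomorphism commuting with $\pi$ preserves $\preceq$ and lca, so every validity condition for $\mtpp$ transfers to $\mtpp_w$. Feasibility, namely $\epi\subseteq X$, is preserved as well, because projected locations are unchanged and $g$ adds no duplication.
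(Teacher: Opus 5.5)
Your overall strategy matches what the paper means by ``follows immediately from the definition'': you build $\xi=\xi'\cup\xi''_w$, show $\Map_\xi(g)=v$ with $\pi(v)=s$, and check the four validity conditions, using that the shift is a $\pi$-commuting isomorphism between subtrees of $\hat{N}$. Those checks are fine. The gap is in the choice of $v$.

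You take $v$ to be the parent of $\mtp(g')$. You then use that $v$ has a child $v_1$ with $\pi(v_1)=s'$ above all $g'$-side images, and a child projecting to $s''$ below which $w$ lies. That holds only when $\mtp^*(g')=s'$. Feasibility for $(g',s',X)$ gives just $\mtp^*(g')\preceq s'$, and the lemma is used in the proof of P3 with arbitrary such mappings. For instance, if $g'$ is a leaf and $s'$ is not, then $\pi(v)$ is the parent of the leaf $\mtp^*(g')$, a node visible from $s'$. So $\pi(v)\neq s$, and both the construction of $w$ and the computation $\Map_\xi(g)=v$ collapse.

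Worse, with reticulations the particular copy $\mtp(g')$ need not lie below any node of $\pi^{-1}(s)$ at all. Its root path in $\hat{N}$ may reach $\mtp^*(g')$ through a reticulation parent that is not visible from $s$. No re-choice of an ancestor of $\mtp(g')$ can repair this; $\mtp$ itself has to be shifted. Your remark about semi-binary nodes above $v_1$ or $v_2$ does not address this. The paper's parenthetical ``note that $\pi(v)=s$'' has the same defect, so you inherited it, but a proof has to fix it.

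The repair uses tools you already have:
\begin{itemize}
\item Pick any $v\in\pi^{-1}(s)$, and let $v_1,v_2$ be its children with $\pi(v_1)=s'$ and $\pi(v_2)=s''$.
\item Lift an $N$-path from $s'$ to $\mtp^*(g')$ starting at $v_1$, and one from $s''$ to $\mtpp^*(g'')$ starting at $v_2$. This gives $u'\preceq v_1$ and $u''\preceq v_2$ with the right projections.
\item Define the mapping as $\mtp_{u'}\cup\mtpp_{u''}\cup\{(g,v)\}$.
\end{itemize}
Note that $u''$ must be taken below $v_2$, not merely below $v$ as in the paper's definition of $w$ (and as your phrase ``the node below $v$'' suggests). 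If $\mtpp^*(g'')$ is visible from both $s'$ and $s''$, it also has copies below $v_1$. Using one of them would give $\Map_\xi(g)\preceq v_1\prec v$, so $g$ could not be a speciation assigned to $s$.

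With this choice your isomorphism argument applies to both shifted mappings. The rest of your verification then goes through unchanged: the lca computation, time consistency, fixed speciation at $g$, and $\Mapf(a)\prec v$ for every duplication $a$ below $g$.
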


\subsection*{The main proof}

The proof of Lemma~\ref{lem:DPcor} is by induction on the structure of $G$ and $N$.
The base of induction is when $g \in L(G)$ and $s \in L(N)$, for which all properties are easy to verify.

\emph{Inductive assumption:} For every $x$, $y$ such that $g \succ x$,
and $s \succeq y$ or $g \succeq x$ and $s \succ y$, P1-P6 are satisfied.
\emph{Inductive hypothesis:} For $g$ and $s$, where at least one of $g$ and $s$ is not a leaf, P1-P6 are satisfied.

First, if $s \in R(N)$ then the properties follow immediately from the inductive assumption for $g$ and $s'$, and identities~\eqref{vdretcase}.
Therefore, in the next part, we assume that $s$ is not a reticulation, and we prove the properties using $s$ instead of $\dot{s}$.

We start with several properties.

(A1) if $\delta^\downarrow(g',s)=\Unknown$, then $s \notin X$. Assume that $s \in X$.
Then, by P6, there is a weakly feasible mapping $\mtp \colon V(G|g') \arrow V(\hat{N})$ for $(g',s, X)$. Since, $\mtp^*(g) \preceq s$, there must be $v \in
\vn$ such that $\mtp(g) \preceq v$ and $\pi(v)=s$.
Now, let $\mt := \mtp|_{V(G|g') \setminus \dsetupper(\mtp)} \cup \dsetupper(\mtp) \times \{v\}$. It is not difficult to see that $\mt$ is feasible for $(g',s, X)$, since all duplications from $\dsetupper(\mtp)$ are assigned to $s$. A contradiction.

(A2) If $\epsilon(g',s)=\Unknown$, then there is a weakly feasible  $\mtp$ for $(g',s,X)$ and $g'$ is an lca-duplication assigned to $s$. Here, $\sigma(g',s)$ cannot be $\True$, thus $\sigma(g',s)=\False$, by P3 and P4. Therefore, $\delta(g',s)=\Unknown$. The rest follows from P2.

We first prove properties P1-P4 for $\delta$ and $\sigma$. Then, we show that P5 and P6 hold for $g$ and $s$.

(P1, $\Rightarrow$): If $\delta(g,s)=\True$, then, from (\ref{v1d})
$g$ is internal, $s \in X$ and $\delta^*(g,s) = \True$.
Then, w.l.o.g., for a child $g'$ of $g$,
$\epsilon(g',s) \wedge \delta^\downarrow(g'',s)=\True$.
Since, $\delta(g',s) \vee \sigma(g',s)$ is $\True$, it follows from the inductive assumption for P1 and P3, that there is a feasible mapping
$\mtp$ for $(g',s,X)$ such that $\mtp^*(g')=s$ and $g'$ is either speciation or lca-duplication. For the other child,
we have $\delta^\downarrow(g'',s)=\True$. From P5, there is a feasible mapping $\Mapf''$ for $(g'',s,X)$. Then, the mapping $\mtp \oplus \mtpp$ is feasible for $(g,s,X)$ where $g$ is an lca-duplication assigned to $s$.

(P1, $\Leftarrow$):
Assume there is a feasible mapping
$\Mapf \colon V(G|g) \arrow \vn$ for $(g,s,X)$ such that $g$ is~an~lca-duplication assigned to $s$. Thus, $s \in X$ and $g$ is internal. We conclude that the condition from  (\ref{v1d}) is satisfied, and  $\delta(g,s)=\delta^*(g,s)$. W.l.o.g. we assume
that $g'$ is assigned to $s$ (recall that $g$ is an lca-duplication). Then,
$\Mapfgp$ is feasible for $(g',s,X)$.
By the inductive assumption for P1 (if $g'$ is a duplication) or P3 (if $g'$ is a speciation or a leaf), we conclude that $\epsilon(g',s)=\True$. For the second child, we have $F''^*(g'') \preceq s$, thus
$\Mapfgpp$ is feasible for $(g'',s,X)$ and by P5, $\delta^\downarrow(g'',s)=\True$. Finally, $\delta^*(g,s)=\True = \delta(g,s)$.

(P2, $\Rightarrow$): Let $\delta(g,s)=\Unknown$.

(Case P2.a) If $s \in X$ then, from (\ref{v1d}) $g$ is internal and $\delta^*(g,s) = \Unknown$.  W.l.o.g., we may assume that $\epsilon(g',s) \wedge \delta^\downarrow(g'',s)=\Unknown$,
thus $\delta^\downarrow(g'',s) \succeq \Unknown$. If $\delta^\downarrow(g'',s)=\Unknown$
then $s \notin X$ from (A1), we conclude that $\delta^\downarrow(g'',s)=\True$ and $\epsilon(g',s)=\Unknown$.
By (A2) there is a weakly feasible mapping $\Mapf'$ for $(g',s,X)$ and $g'$ is an lca-duplication assigned to $s$. Since $s \in X$, we can construct a feasible mapping for $(g',s,X)$ from a weakly feasible $\mtp$, by assigning all upper duplications from $\mtp$ to $s$. A contradiction.

(Case P2.b) Assume that $s \notin X$. Then, from (\ref{v2d}) $g$ is internal and
$\delta^*(g,s) \wedge \Unknown = \Unknown$. We have that $\delta^*(g,s) \in \{ \True, \Unknown \}$. (Case P2.b.1) Let $\delta^*(g,s)=\True$.
W.l.o.g. assume that $\epsilon(g',s)=\delta^\downarrow(g'',s) = \True$.
Then, $\epsilon(g',s) = \sigma(g',s) \vee \delta(g',s)=\True$.
From P1 and $s \notin X$, $\delta(g',s)$ cannot be $\True$, thus $\sigma(g',s)=\True$. From P3, there is a feasible
$\mtp$ for $(g',s,X)$ and $g'$ is an speciation or a leaf assigned to $s$.
For the second child we have $\delta^\downarrow(g'',s) = \True$ and from P5, there is a feasible $\mtpp$ for $(g'',s,X)$.
Let $\Mapf = \mtp \oplus \mtpp$. It is not difficult to see that $g$ is an lca-duplication in $\Mapf$ since $g'$ is speciation or a leaf. Thus, $\Mapf$ is weakly feasible. Note that there is no feasible mapping for $(g,s,X)$; otherwise, $x \in X$ based on the property that $g$ is an lca-duplication assigned to $s$.

(Case P2.b.2) Let $\delta^*(g,s)=\Unknown$.
W.l.o.g. assume that
$\epsilon(g',s) \wedge \delta^\downarrow(g'',s) = \Unknown$.
If $\epsilon(g',s)=\True$ then $\delta^\downarrow(g'',s)=\Unknown$. Then, similarly to the previous case $g'$ is a speciation or a leaf assigned to $s$, and there is a feasible $\mtp$ for $(g',s,X)$, while, from P6, there is a weakly feasible $\mtpp$ for $(g'',s,X)$. Then, similarly to the previous case
$\mtp \oplus \mtpp$ is weakly feasible mapping for $(g,s,X)$ where $g$ is an lca-duplication assigned to $s$.
It remains to analyse the case when $\epsilon(g',s)=\Unknown$.
By (A2) there is a weakly feasible  $\mtp$ for $(g',s,X)$ and $g'$ is an lca-duplication assigned to $s$. Here, $\delta^\downarrow(g'',s) \in \{\True,\Unknown\}$, and depending on the value either, by P5 there is a feasible ($\True$) or, by P6, weakly feasible ($\Unknown$) $\mtpp$ for $(g'',s,X)$.
In the valid mapping $\mtp \oplus \mtpp$, $g'$ is an lca-duplication assigned to $s$,
and $g$ is also lca-duplication assigned to $s$. However, $s \notin X$, thus the mapping
is weakly feasible for $(g,s,X)$.
Also, there is no feasible mapping for $(g,s,X)$ in this case, due to the property
that $g'$ is an lca-duplication assigned to $s$.

(P2, $\Leftarrow$). Let $\mt$ be a weakly feasible mapping for $(g,s,X)$ such that $g$ is an lca-duplication assigned to $s$.
Since $g$ is an lca-duplication, $g$ is also an upper duplication and $s \notin X$.
Thus, $\delta(g,s)=\delta^*(g,s) \wedge \Unknown$ from $(\ref{v2d})$. W.l.o.g., we may assume that $g'$ is assigned to $s$. If $g'$ is a speciation or a leaf,
then $\Mapfgp$ is feasible for $\forgp$ since no upper duplication is present in $G|g'$. From P3, $\sigma(g',s)=\epsilon(g',s)=\True$.
If $g'$ is a duplication, $g'$ is an upper lca-duplication, thus $\Mapfgp$ is weakly feasible. From P2, $\delta(g',s)=\Unknown$.
In all cases, $\epsilon(g',s) \geq \Unknown$.
Similarly, for the second child of $g$, the mapping
$\Mapfgpp$ is either weakly feasible for $\forgpp$ if $g''$ is an upper duplication
assigned to a node not in $X$, or feasible otherwise.
By P5 and P6, $\delta^\downarrow(g'',s) \geq \Unknown$.
Finally, $\delta^*(g,s) \geq \Unknown$ and $\delta(g,s)=\delta^*(g,s) \wedge \Unknown = \Unknown$.

(P3, $\Rightarrow$). Let $\sigma(g,s)=\True$.
Note that at least one of $g$ and $s$ is internal by the inductive assumption. Then, $g$ is internal and $s$ is a tree-node, from (\ref{v1s}).  W.l.o.g. we may assume that $\delta^\downarrow(g',s') \wedge \delta^\downarrow(g'',s'')=\True$.
Thus, from P5, we have two feasible mappings $\Mapf'$ for $(g',s',X)$
and $\Mapf''$ for $(g'',s'',X)$. Then, the mapping
$\Mapf' \otimes \Mapf''$ is feasible for $(g,s,X)$
where $g$ is  a speciation assigned to $s$.

(P3, $\Leftarrow$).
Assume that there is a feasible mapping $\Mapf$ for $(g,s,X)$ such that $g$ is a speciation or a leaf assigned to $s$. If $g$ is a leaf, the statement is obvious.
Assume that $g$ is a speciation, then $s$ is a tree-node and
$\sigma(g,s)$ follows from (\ref{v1s}).
W.l.o.g. we may assume that $\Mapf^*(g') \preceq s'$ and
$\Mapf^*(g'') \preceq s''$.
Thus, $\Mapfgp$ is feasible for $(g',s',X)$.
From P5, $\delta^\downarrow(g',s')=\True$. Similarly, we obtain $\delta^\downarrow(g'',s'')=\True$.
Finally, $\sigma(g,s) = \Lop ( \delta^\downarrow(g',s') \wedge  \delta^\downarrow(g'',s'')) = \True$.

(P4) It follows easily from the definition of $\sigma$ and the operator $\Lop$.

(P5, $\Rightarrow$) Assume that $\delta^\downarrow(g,s)=\True$.

(Case P5.1) If $s$ is a leaf, then $g$ is internal by the inductive assumption.
Then, by (\ref{v1dd}), $\delta^\downarrow(g,s)=\epsilon(g,s)=\delta(g,s) \vee \sigma(g,s) = \True$. Note, that $\sigma(g,s)=\False$, otherwise both $g$ and $s$
are leaves. Thus, $\delta(g,s)=\True$ and the feasible mapping for $(g,s,X)$ exists by the already proven P1.

(Case P5.2). If $s$ is a tree-node and $s \notin X$, then, by (\ref{v1dd}),
one of $\delta(g,s)$, $\sigma(g,s)$, $\delta^\downarrow(g,s')$ or $\delta^\downarrow(g,s'')$ is $\True$.
If $\delta(g,s)=\True$, there is a feasible mapping for $(g,s,X)$ from already proven P1 for $g$ and $s$.
Similarly, we have the mapping from P3 if $\sigma(g,s)=\True$.
If $\delta^\downarrow(g,s')=\True$, then there is a feasible mapping $\Mapf$ for $(g,s',X)$ from the inductive assumption for P5. Clearly, $\Mapf$ is also feasible
for $(g,s,X)$.
The remaining case when $\delta^\downarrow(g,s'')=\True$ is analogous.

(Case P5.3). If $s$ is a tree-node and $s \in X$, then, by (\ref{v2dd}),
at least one among $\delta(g,s)$, $\sigma(g,s)$, $\Mop \delta^\downarrow(g,s')$, and $\Mop \delta^\downarrow(g,s'')$ is $\True$.
The proof is the same as above when $\delta(g,s)=\True$,
$\sigma(g,s)=\True$, $\delta^\downarrow(g,s')=\True$ or $\delta^\downarrow(g,s'')=\True$.
For the remaining case, assume that $\delta^\downarrow(g,s')=\Unknown$. Then there is a weakly feasible mapping $\Mapf$ for $(g,s',X)$ from the inductive assumption for P6. By reassigning all upper $\Mapf$-duplications to $s$ we construct a feasible mapping for $(g,s,X)$. The remaining case when $\delta^\downarrow(g,s'')=\Unknown$ is analogous.

(P5, $\Leftarrow$)
Assume there is a feasible mapping $\Mapf$ for $(g,s,X)$.

(Case P5.1) If $s$ is a leaf, then $g$ is internal by the inductive assumption. Thus, $g$ is an upper lca-duplication assigned to $s$ under the mapping.
Thus, $s \in X$ and by already proven P1, $\delta(g,s)=\True$. This yields $\epsilon(g,s)=\True=\delta^\downarrow(g,s)$ in this case by~\eqref{v1dd}.

(Case P5.2) Let $s$ be a tree-node. If $g$ is an lca-duplication assigned to $s$
then similarly to the above case, from P1, $s \in X$, and $\delta(g,s)=\True$ and $\delta^\downarrow(g,s)=\True$ using (\ref{v2dd}).
The proof is analogous when $g$ is a speciation assigned to $s$. The only difference is that $s$ need not be in $X$.

For the remaining cases, we have either $g$ is a duplication assigned to $s$, but not the lca-duplication, or there exists $v \prec s$, such that $g$ is either a duplication, or a speciation assigned to $v$.

(Case P5.2.a) Let $g$ be a speciation assigned to a node $v$ in $N$.
W.l.o.g., we may assume that $v \preceq s' \prec s$.
Then $\Mapf^*(g)=v$ and there is no upper duplication in $T$. Clearly, $\Mapf$ is also feasible for $(g, s',X)$. From, the inductive assumption for P5, $\delta^\downarrow(g,s')=\True$ and also $\Mop \delta^\downarrow(g,s') = \True$. In all cases, from \eqref{v2dd} if $s \in X$,
and \eqref{v1dd}, otherwise, we obtain $\delta^\downarrow(g,s)=\True$.

(Case P5.2.b) Assume that $g$ is a duplication assigned to $s$, but not the lca-duplication. Thus, $s \in X$ and $\Mapf^*(g)=s$. Let $w:=\Map_\xi(g) \in \hat{N}$, where $\xi=\Mapf|_{L(G)}$. In other words, $w$ is the lca-mapping of $g$ in $\hat{N}$.
Since, $g$ is not lca-duplication, we have that $\pi(w) \prec s$. Let $v'$ be a child
of $\Mapf(g)$ present on the path from $\Mapf(g)$ to $v$ in $\hat{N}$. Then, $\pi(v')$ is a child, say $s'$, of $s$ in $N$.
We construct a mapping $\Mapf'$ such that
$\Mapf'(u):=v'$, if $u$ is upper duplication induced by $\Mapf$, and $\Mapf'(u):=\Mapf(u)$, otherwise.
If $s' \in X$, then such a mapping is feasible for $(g,s',X)$ since all upper duplications are assigned to $s'$, otherwise, the mapping is weakly feasible for $(g,s',X)$. Note that in the second case, there may exist alternative feasible mappings for $(g,s,X')$. In both cases, we get from the inductive assumption for $P5$ and $P6$ that $\delta^\downarrow(g,s') \geq \Unknown$, which gives $\Mop \delta^\downarrow(g,s')=\True$ (note that $s\in X$). Thus, $\delta(g,s)=\True$.

(Case P5.2.b) Assume that $g$ is a duplication assigned to a node strictly below $s$.
Similarly to the previous case, we define $w$ and identify $s'$ as the child of $s$
such that there is a path from $s'$ to $\pi(w)$.
We conclude that $\Mapf^*(g) \preceq s'$. Thus, $\Mapf$ is a feasible mapping for
$(g,s',X)$. The rest follows in the same way.

(P6, $\Rightarrow$) Assume, $\delta^\downarrow(g,s)$ is $\Unknown$.
Here, $\Unknown$ is obtained only from (\ref{v1dd}), i.e., when $s \notin X$. In addition, $s$ is a tree-node, i.e., $ch(s)$ is not empty.
After expanding $\epsilon$, $\delta(g,s) \vee \sigma(g,s) \vee \delta^\downarrow(g,s') \vee \delta^\downarrow(g,s'')$ is $\Unknown$.
By P4, $\sigma(g,s)=\False$.
If $\delta(g,s)=\Unknown$, then there is no feasible mapping for $(g,s,X)$ but there is a weakly feasible mapping from already proven P2.
If $\delta^\downarrow(g,s')=\Unknown$, then, by the inductive assumption for P6, there is no feasible mapping for $(g,s',X)$, but there is a weakly feasible mapping
for $(g,s',X)$. Since, $s \notin X$, these properties are also satisfied for $(g,s,X)$.
A similar argument hold when $\delta^\downarrow(g,s'')=\Unknown$.

(P6, $\Leftarrow$)
Assume there is no feasible mapping for $(g,s,X)$, but there is a weakly feasible mapping $\Mapf$ for $(g,s,X)$. We show that $\delta^\downarrow(g,s)=\Unknown$.

(Case P6.1) If $s$ is a leaf, then $g$ is internal by the inductive assumption. Thus, $g$ is an upper duplication assigned to $s$.
Thus, $s \notin X$ and by already proven P2, $\delta(g,s)=\Unknown$. Since, $\sigma(g,s)=\False$, $\epsilon(g,s)=\Unknown$ and $\delta^\downarrow(g,s)=\Unknown$ from (\ref{v1dd}).

(Case P6.2) Assume $s$ is internal. Note that $g$ cannot be a speciation or a leaf, otherwise there is no upper duplication in $T$ and $\Mapf$ is not weakly feasible.
If $g$ is an lca-duplication assigned to $s$,
then similarly to the previous case, from P2, $s \notin X$, and $\delta(g,s)=\Unknown$. Note that $\delta^\downarrow(g,s')=\delta^\downarrow(g,s'')=\False$
(since $g$ is lca-duplication). We conclude that
$\delta^\downarrow(g,s)=\Unknown$ from (\ref{v1dd}).

For the remaining case, $g$ is a duplication, and
either $g$ is assigned to $s$ and $g$ it is not lca-duplication or
$g$ is assigned to a node strictly below $s$.
Also $s \notin X$, otherwise there is a feasible mapping for $(g,s,X)$
obtained from $\Mapf$ by assigning all upper duplications to $s$.
The mapping $\Mapf$ is weakly feasible for $(g,s',X)$, where $s'$ is the node in $N$
from which $\pi(\Mapf(g))$ is visible (see P5.2, $\Leftarrow$ for the construction of $s'$). Clearly, there is no feasible mapping for $(g,s',X)$,
otherwise such a mapping would be feasible for $(g,s,X)$.
By the inductive assumption for P6, we conclude that
$\delta^\downarrow(g,s')=\Unknown$. For the second child of $s$,
we also have $\delta^\downarrow(g,s'') \neq \True$,
otherwise, by P5, $\Mapf$ is feasible for $(g,s'',X)$ and also for $(g,s,X)$, a contradiction. Similarly, we have that $\delta(g,s) \neq \True$
and $\sigma(g,s)=\False$. Thus, $\epsilon(g,s) \leq \Unknown$.
Finally, $\delta^\downarrow(g,s)=\Unknown$, which follows from \eqref{v1dd}.
This completes the proof of all properties P1-P6 and Lemma~\ref{lem:DPcor}.

\section*{Appendix: the proof of Thm~\ref{lem:DPcormain}}
\begin{proof}
The proof follows immediately from P5 of Lemma~\ref{lem:DPcor}:
$\delta^\downarrow(\troot(G),\troot(N))$ is $\True$ if and only if
there is a feasible mapping $\Mapf$ for $(\troot(G),\troot(N),X)$.
In such a case $\epi(\Mapf) \subseteq X$.
\end{proof}

\section*{Appendix: the proof of Lemma~\ref{lem:genNetEC}}
\begin{proof}
If there is a feasible solution in the outgrouped case, then there is an episode at the root of network $N^\omega$. Removing the inserted nodes yields a feasible solution to the original problem instance. The other direction follows similarly.
\end{proof}

\section*{Appendix: Outgroup construction for multiple gene trees (example)}

We illustrate the outgroup construction from Lemma~\ref{lem:genNetEC} on a small instance with two gene trees $G_1 = (a,(b,c))$ and $G_2 = ((a,b),c)$ over taxa $\{a,b,c\}$ and a network $N$.

Introduce an outgroup species $\omega \notin \{a,b,c\}$. Extend the network to $N^\omega$ by adding a new root $r^\omega$ with children $\omega$ (leaf) and the original root of $N$. For each gene tree, append $\omega$ as a sibling of its root: $G_1^\omega = (G_1,\omega)$ and $G_2^\omega = (G_2,\omega)$. Then merge iteratively: $G_2^\omega = (G_2^\omega, G_1^\omega)$, so the root of the merged tree has $G_2^\omega$ and $G_1^\omega$ as children. This root is mapped to $r^\omega$ as a speciation, and the two subtrees independently resolve $G_1$ and $G_2$ inside $N$.

Applying Alg.~\ref{alg:NetECsingle} to $(G^\omega_2, N^\omega)$ and removing the forced episode at $r^\omega$ yields the optimal episodes for the original instance $(G_1, G_2, N)$, as stated in Lemma~\ref{lem:genNetEC}.

\section*{Appendix: Lemma~\ref{lemma:retexlude} (Reticulation Exclusion)}

This lemma justifies a key optimization in the DP: since a reticulation node has a unique child in $\hat{N}$, a duplication mapped to the reticulation is indistinguishable from one mapped to its child, so we may always assume episode locations lie outside $R(N)$.

\begin{lemma}
\label{lemma:retexlude}
For a valid mapping $\Mapf$ between $G$ and $\hat{N}$, let $\Mapf'$
be the mapping such that for $g \in V(G)$, $\Mapf'(g)$
is the child of $\Mapf(g)$ if $\pi(\Mapf(g)) \in R(N)$,
and $\Mapf'(g):=\Mapf(g)$, otherwise.
Then, $\Mapf'$ is valid, $\epi(\Mapf') \cap R(N) = \emptyset$ and $|\epi(\Mapf)| \geq |\epi(\Mapf')|$.
\end{lemma}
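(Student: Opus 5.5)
We prove Lemma~\ref{lemma:retexlude} by checking the four validity conditions for $\Mapf'$ with the same scenario $\xi$ that witnesses validity of $\Mapf$, and then comparing episode sets. First we record two structural facts. In $\hat{N}$ every node $v$ with $\pi(v)\in R(N)$ has exactly one child $v'$, and $\pi(v')=\pi(v)'$ is the unique child of the reticulation in $N$. Consequently $\Mapf'(g)$ is well defined, and $\pi(\Mapf'(g))\notin R(N)$ provided the child of a reticulation is never itself a reticulation. If reticulation chains are allowed, we simply iterate: $\Mapf'(g)$ is taken to be the first descendant of $\Mapf(g)$ whose projection is not a reticulation, and the argument below is unchanged. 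Moreover, no lca-image $\Map_\xi(a)$ projects to a reticulation. For a leaf this is immediate. For an internal node, an lca of two distinct nodes in a tree is a node with at least two children, and a node whose projection is a reticulation has a single child. This gives $\Map_\xi(a)\ne\Map_\xi(a)$'s reticulation copy; when $\Map_\xi(g')=\Map_\xi(g'')$ the lca equals that common image, which by induction is not a reticulation copy either.

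\textbf{Validity.} Speciations and leaves satisfy $\Mapf(a)=\Map_\xi(a)$, so $\Mapf'(a)=\Mapf(a)$ and the fixed-speciation condition persists; here we read the condition in $\hat{N}$, as the paper intends. Now let $a$ be a $\xi$-duplication. Since $\Mapf(a)\succeq\Map_\xi(a)$, $\Mapf(a)$ is a reticulation copy, and $\Map_\xi(a)$ is not, we have $\Mapf(a)\ne\Map_\xi(a)$. Hence $\Map_\xi(a)$ lies strictly below $\Mapf(a)$, and therefore below its unique child. So $\Mapf'(a)\succeq\Map_\xi(a)$, which gives the raising condition. The upper bound $\Mapf'(a)\preceq\Mapf(a)\prec\Map_\xi(b)$ is preserved because we only move nodes downward. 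For time consistency, take $a\preceq b$. If $\Mapf(a)\prec\Mapf(b)$ strictly, then $\Mapf(a)\preceq$ the child of $\Mapf(b)$ whenever $\Mapf(b)$ is a reticulation copy, and hence $\Mapf'(a)\preceq\Mapf(a)\preceq\Mapf'(b)$. If $\Mapf(a)=\Mapf(b)$, both nodes move to the same image. In every case $\Mapf'(a)\preceq\Mapf'(b)$.

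\textbf{Episodes.} Since the scenario $\xi$ is unchanged, the set of $\Mapf'$-duplications equals the set of $\Mapf$-duplications. Define $\rho\colon V(N)\to V(N)$ by $\rho(s)=s'$ for $s\in R(N)$ (iterated if necessary) and $\rho(s)=s$ otherwise. Then $\Mapf'^*=\rho\circ\Mapf^*$, so $\epi(\Mapf')=\rho(\epi(\Mapf))$. It follows that $|\epi(\Mapf')|\le|\epi(\Mapf)|$, and $\epi(\Mapf')\cap R(N)=\emptyset$ because the image of $\rho$ avoids $R(N)$.

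The main point needing care is the claim that lca-images never land on a reticulation copy, since the raising condition for duplications depends on it. The time-consistency case analysis at equal images is routine.
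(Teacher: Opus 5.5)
Your argument is correct and follows the paper's idea. A reticulation copy in $\hat{N}$ has a single child, so it cannot host a speciation or any $\Map_\xi$-image, and a raised duplication sitting there can be pushed to the child without breaking the four validity conditions. You go further than the paper by checking each condition explicitly, deriving the episode bound via $\rho$, and iterating down reticulation chains; the iteration is actually needed for $\epi(\Mapf')\cap R(N)=\emptyset$ in non-tree-child networks, which the paper's one-step definition overlooks.

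One small fix: the lca of two distinct nodes has two children only when they are incomparable. If $\Map_\xi(g')\prec\Map_\xi(g'')$, the lca is $\Map_\xi(g'')$ itself. Your induction already covers this case in the same way it covers the equal case.
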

\begin{proof}
We show that $\Mapf'$ is valid, which follows from
the fact that there is no speciation assigned to a reticulation node, since it has a single child. Thus, any node $g$ assigned to a reticulation $s$ via $\Mapf$ must be a duplication. Moreover, every child $g'$ of $g$ is either a duplication assigned to $s$
or below, or it is a speciation assigned to the only child $s'$ of $s$ or below. Thus, mappings of all duplications assigned to $s$ can be lowered to be assigned to $s'$. Now, all four conditions of mapping validity for $\Mapf'$ follow easily.
\end{proof}
\end{document}